\begin{document}

\title{From $\mu$\mbox{-}Calculus to Alternating Tree Automata using Parity Games}


\author{M. Fareed Arif}


\institute{F. Author \at
	     CASL, University College Dublin, Ireland,\\
             Tel.:+353 (1) 7162641 \\
             Fax: +353 (1) 7165396\\
             \email{12250832@ucdconnect.ie}
}

\date{\today}
\maketitle

\begin{abstract}
$\mu$\mbox{-}Calculus and automata on infinite trees are complementary ways of describing infinite tree languages.
The correspondence between $\mu$-Calculus and alternating tree automaton is used to solve the satisfiability and model checking problems by compiling the modal $\mu$\mbox{-}Calculus formula into an alternating tree automata. Thus advocating an automaton model specially tailored for working with modal $\mu$\mbox{-}Calculus. The advantage of the automaton model is its ability to deal with arbitrary branching in a much simpler way as compare to the one proposed by Janin and Walukiewicz. Both problems (i.e., model checking and satisfiability) are solved by reduction to the corresponding problems of alternating tree automata, namely to the acceptance and the non\mbox{-}emptiness problems,  respectively. These problems, in turn, are solved using parity games where semantics of alternating tree automata is translated to a winning strategy in an appropriate parity game.
\keywords{Game Playing \and $\mu$\mbox{-}Calculus \and Alternating Tree Automaton \and Parity Games}
\end{abstract}
\section{Introduction}
\label{intro}

There is a growing recognition to apply formal mathematical methods for specifying and reasoning about the correctness of computing systems namely in model checking. Model checking presents an efficient and more expressive method to automatically verify a transition system whether it meets a correctness specification formulated in $\mu$\mbox{-}Calculus. Modal $\mu$\mbox{-}Calculus is a fundamental logic for specifying properties of transition systems. It is a quite expressive language that subsumes most of the common logical formalisms used in verification including LTL, CTL, CTL* and PDL~\cite{Manol00}. The $\mu$\mbox{-}Calculus over binary trees coincide in expressive power with alternating tree automata and it is as expressive as Monadic Second-order logic (MSOL) on trees~\cite{Igor02}. 

In this paper, we explain a method to translate a given $\mu$\mbox{-}Calculus formula $\varphi$ into an appropriate finite automaton $\mathcal{A}$ such that $L(\varphi) = L(\mathcal{A})$~\cite{Wilke00}. Such a translation redcues the model checking and the satisfiability problem in $\mu$\mbox{-}Calculus to the word and the emptiness problem for finite automata~\cite{Daniel02}. Thus resulting an algorithm to solve the model checking and the satisfiability problem on a computer. Parity game play a crucial role within this translation, since it defines the semantics of alternating tree automata (i.e., whether an automaton accepts or rejects some transition system by the existsence of a winning strategy for a player in an appropriate parity game)~\cite{Zappe02}. Parity games provide a straight froward, convent construction to complement a given alternating tree automaton and also use to show the decidability of the word problem and the emptiness problem.

This paper proceeds as in the Sec.2 transitions systems are explained along with alternating tree automaton. Sec.3 comprises of parity games and automaton acceptance. Sec.4 introduce $\mu$\mbox{-}Calculus formalism where in Sec.5 we define encoding from $\mu$\mbox{-}Calculus to alternating tree automaton, concluding the last section by explaining the satisfiability and model checking problems under defined encoding along with complexity bounds. 

\section{Transition Systems and Alternating Tree Automaton}
\textit{Transition Systems} are structures consisting of a non\mbox{-}empty set of states, a set of unary relation describing properties of states and a set of binary relation describing transitions between states. In short, transition systems are describing the operational semantics of any program. Thus model checking amounts to verifying that any corresponding transition system has a property of interest. Properties of interest are formulated in logic languages like (LTL, CTL, CTL*, PDL, $\mu$\mbox{-}Calculus and MSOL);
\begin{definition}\textit{(Transition System)}
A \textbf{Transition system} (i.e., Kripke Structure) over $\mathcal{P}$ is a tripe $\mathcal{K} = (S, R, \lambda)$ where
\begin{itemize}
\item  $\mathcal{P}$ be a set of atomic propositions (properties) and for any propositional interpretation $\mathcal{I}: \mathcal{P} \rightarrow \{true,false\}$.
 \item S is a set called \textbf{states} (worlds), universe of $\mathcal{K}$,
\item $R \subseteq S \times S$ is a transition relation and
\item $\lambda: S \rightarrow 2^{\mathcal{P}}$ is a mapping (i.e., $\lambda(s_i) = p_i$ for every $p_i \in  \mathcal{P}$). $\lambda(s_i) = p_i$  if  $p_i$ is true in $s_i$ and $\neg p_i$ if $p_i$ is false in $s_i$.
\end{itemize}
$\lambda: S \rightarrow 2^{\mathcal{P}}$ regards transition systems as labeled directed graphs. For every $s \in S$, we denote
\begin{equation}
 sR = \{s' \in S| (s,s')\in R\}, Rs = \{s'\in S| (s',s)\in R \}
\label{eqno1}
\end{equation}
\end{definition}

A \textbf{pointed transition system} (i.e, a rooted kripke structure) is a pair $(\mathcal{K},s_I)$ in a transition system $\mathcal{K} = (S, R, \lambda)$ with an initial state $s_I \in S$~\cite{Daniel02}.

An alternating tree automata is a device which accepts or rejects pointed transition systems by parsing the paths.
\begin{definition}\textit{(Alternating Tree Automata)}
 An alternating tree is a tuple $\mathcal{A} = \{Q, q_I,\delta,\Omega\}$ where
\begin{itemize}
 \item $Q$ is a finite set of states of the automaton,
 \item $q_I \in Q$ is a state called the initial state,
\item $\delta: Q \rightarrow TC^Q$ is a transition function which maps every state $q\in Q$ to a transition condition $TQ$ where all the transition conditions $TQ$ over $Q$ are defined by:
\begin{itemize}
\item $0$ and $1$ are transition conditions over $Q$.
\item $p, \neg p$ are transition conditions over $Q$, for every $p\in \mathcal{P}$.
\item $q, \Box q, \Diamond q$ are transition conditions over $Q$, for every $q \in Q$.
\item $q_1 \wedge q_2, q_1 \vee q_2$ are transition conditions over $Q$, for every $q_1,q_2\in Q$.
\end{itemize}
\item $\Omega: Q \rightarrow \omega$ is called priority function (coloring function) which assigns color to states of $\mathcal{A}$.
\end{itemize} 
\end{definition}

\subsection{Index of Alternating Tree Automata}
Concerning the complexity of an alternating tree automata, an important notion is its index~\cite{Wilke00}. Let $\mathcal{A} = \{Q, q_I,\delta,\Omega\}$ be an alternating tree automata, then the transition graph $G(\mathcal{A})$ has the set $Q$ as vertex set. There is a edge relation from a vertex $q$ to $q'$ iff $q'$ appears in the transition condition $\delta(q)$.

Let $\mathcal{C}^{\mathcal{A}}$ be the set of all strongly connected components of the transition graph $G(\mathcal{A})$ of $\mathcal{A}$. For every $C \in \mathcal{C}^\mathcal{A}$, let
\begin{equation}
 m_C^{\mathcal{A}} = |\{\Omega^{\mathcal{A}}(q)|q\in C\}|
\end{equation}
denote the number of priorities used in $C$. The index of $\mathcal{A}$, denoted by $index(\mathcal{A})$, is the maximum of all these values, that is,
\begin{equation}
index(\mathcal{A}) = max(\{m_C^{\mathcal{A}}|C\in \mathcal{C}^{\mathcal{A}}\}\cup\{0\})
\end{equation}

\section{Infinite Games \& Parity Acceptance Conditions}
This section introduce infinite two-person games on directed graphs along with winning play strategies for a certain player~\cite{Rene02}. 
\begin{definition}\textit{(Game)}
\label{def:game}
\end{definition}
A game is composed of an arena and a winding condition. Let $\mathcal{A}$ be an arena then the pair $\mathcal{G} = (\mathcal{A}, Win)$ is called a $\textbf{game}$ where $Win\subseteq V^{\omega}$ is a winning set where $\omega$ is infinite supply of intergers. 
\begin{definition}\textit{(Arena)}
\label{def:arena}
An \textbf{arena} is a triple
\begin{equation}
 \mathcal{A} = (V_0, V_1, E) 
\end{equation}
where $V_0$ is a set of $0$\mbox{-}vertices, $V_1$ a set of $1$\mbox{-}vertices and $E \subseteq (V_0 \cup V_1)\times (V_0 \cup V_1)$ is the $\textbf{edge relation}$ also known as set of moves. The union $V = (V_0 \cup V_1)$ where $V_0$ and $V_1$ are disjoint sets of vertices. Under this union requirement edge relation correspond to $E \subseteq V \times V$. The set of $\textbf{successor}$ of $v\in V$ is defined as $vE = \{v' \in V | (v,v')\in E\}$
Consider a player $\sigma$ for $\sigma \in \{0,1\}$, then the opponent of player $\sigma$ is player $\bar{\sigma}$ (i.e., $\bar{\sigma} =  1 - \sigma$).
\end{definition}

\begin{definition}\textit(Play)
 \label{def:play}
We define a play in the arena $\mathcal{A}$ as followed:
\begin{itemize}
 \item a finite play $\pi = v_0 v_1\dots v_l \in V^+$ with $ v_{i+1} \in v_i E$ for all $i < l$ and $v_l E = \emptyset$ represents a dead\mbox{-}end, a prefix of this finite play is $\rho(\pi) = v_0v_1\dots v_k$  for $k \leq l$.
\item an infinite play $\pi = v_0 v_1\dots v_l \in V^\omega$ with $v_{i+1} \in v_i E$ for all i $\in \omega$, a prefix for this infinite play is $\rho(\pi) = v_0v_1\dots v_k$ for $k \geq 0$
\end{itemize}
\end{definition}

\begin{definition}\textit{(Winning Set)}
 \label{def:win}
To define the winning conditions for Players (Player 0, Player 1) are as followed:
 \end{definition}

Player $\sigma$ is declared the winner of a play $\pi$ in the game $\mathcal{G}$ iff
\begin{itemize}
 \item $\pi$ is a finite play $\pi = v_0 v_1\dots v_l \in V^+$ and $v_l$ is a $\bar{\sigma}$\mbox{-}Vertex where Player $\bar{\sigma}$ can not move anymore (i.e., $v_l$ is a dead\mbox{-}end, $v_lE = \emptyset$) or
\item $\pi$ is an infinite play and $\pi \in Win$.
\end{itemize}
Conversely, Player $\bar{\sigma}$ wins play $\pi$ if Player $\sigma$ does not win $\pi$. 

In every play, a token is placed on some initial vertex $v \in V$. If $v$ is $\sigma$\mbox{-}vertex then Player $\sigma$ moves the token from $v$ to $v' \in vE$, symmetrically moves for Player $\bar{\sigma}$ are considered in case of $\bar{\sigma}$\mbox{-}vertex. This process is repeated infinity often or until a dead end is reached (i.e., a vertex without successor).

\begin{definition}\textit{(Coloring Function)}
 The coloring function $\chi: V \rightarrow C$ color vertices of arena $\mathcal{A}$ where $C$ is a finite set of colors (priorities)(i.e, $C \subseteq \mathbb{N}$) and it extends to an infinite play $\pi=v_0v_1\dots$ as $\chi(\pi)=\chi(v_0)\chi(v_1)\dots$. 
\end{definition}
Let $Win$ is an acceptance condition for an automaton then $W_{\chi}(Win)$ is the winning set consisting of all infinite plays $\pi$ where $\chi(\pi)$ is accepted according to $Win$.
\begin{itemize}
 \item Parity conditions or Colour set C is a finite subset of integers and $Inf(\chi(\pi))$ be the set of colors that occurs infinitely often in $\chi(\pi)$ then for, 
  \subitem Max-parity condition: $\pi \in W_{\chi}(Win)$ iff $max(Inf(\chi(\pi)))$ is even. 
  \subitem Min-parity condition: $\pi \in W_{\chi}(Win)$ iff $min(Inf(\chi(\pi)))$ is even.
\end{itemize}

\begin{example}
\includegraphics[width=0.4\textwidth]{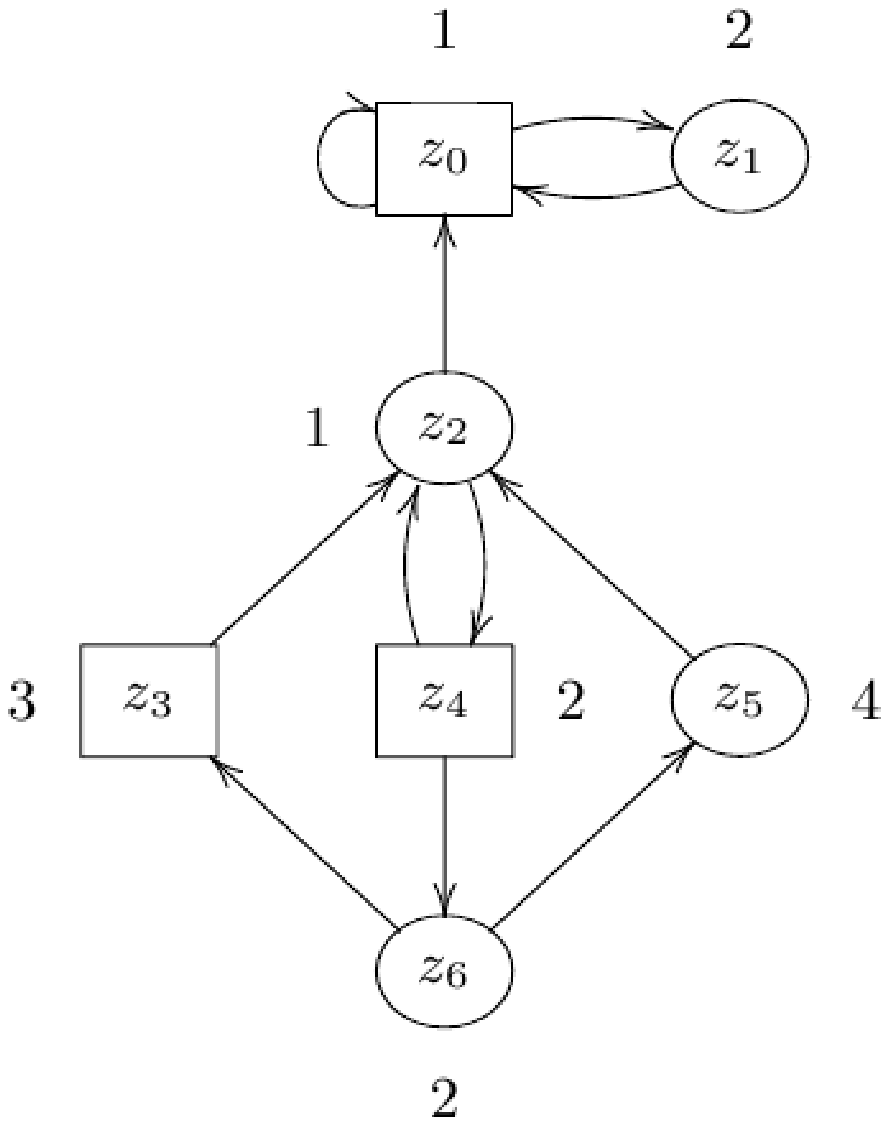}

In above figure, let $\mathcal(G) = (\mathcal{A}, Win)$ defines an arena where $ \mathcal{A} =(V_0,V_1,E)$ such that $V_0 =\{z_1,z_2,z_5,z_6\}$ (circles), $V_1 =\{z_0,z_3,z_4\}$ (squares), Coloring set $C = \{1,2,3,4\}$ and 
$\chi(z_4) = 2$ as shows in figure; winning set of condition $Win = \{\{1,2\},\{1,2,3,4\}\}$. In a possible infinite play in this is $\pi = z_6z_3z_2z_4z_2z_4z_6zz_5(z_2z_4)^{\omega}$. According to Muller acceptance condition (i.e., $\pi \in W_{\chi}(Win)$ iff $Inf(\chi(\pi)) \in \mathcal{A}$) this play $\pi$ is winning for Player $0$ because $\chi(\pi)=23121224(12)^{\omega}$ where $Inf(\chi(\pi))=\{1,2\} \in Win$. For play $\pi'=(z_2z_4z_6z_3)^{\omega}$ yields $\chi(\pi')=(1223)^{\omega}$ and $Inf(\chi(\pi'))=\{1,2,3\} \not\in Win$, hence $\pi'$ is winning for Player $1$. Regarding parity conditions this play is a loss for player $0$ because  $min(Inf(\chi(\pi))=\{1\})$ is odd., hence a win for the opponent.
\end{example}

\subsection{Behavior of Alternating Tree Automata}
Let $(\mathcal{K},s_I)$ be a pointed transition system and let $\mathcal{A}=(Q,q_I,\delta,\Omega)$ be an alternating tree automaton. To define the behavior of $\mathcal{A}$ on $(\mathcal{K},s_I)$ consider alphabet over sequences of pairs ($Q \times S$)~\cite{Daniel02}.

For a word $w \in (Q \times S)^*$  where $(q,s)\in (Q \times S)$ is an alphabet such that the \textit{behavior of $\mathcal{A}$ on $(\mathcal{K},s_I)$} is the language $\mathcal{L(\mathcal{A},\mathcal{K})} \subseteq (Q \times S)^*$ (i.e., $w \in L(\mathcal{A},\mathcal{K})$). The initial state correspond to $(q_I,s_I) \in \mathcal{\mathcal{L(\mathcal{A},\mathcal{K})}}$. To define the transition function consider that the automaton is in the state $q$ and it inspects the state $s$ (i.e., the current instance is $(q,s)$).
Now, the automaton tries to execute the transition condition $\delta(q)$ defined as followed:
\begin{itemize}
 \item If $\delta(q)\in \{0,1\}$ or $\delta(q) = p$ or $\delta(q) = \neg p$ for some proposition $p \in \mathcal{P}$ then the automaton $\mathcal{A}$ need not to take any action.
\item If $\delta(q) = q'$ for some $q' \in Q$, then from current state updates $(q,s) \rightarrow (q',s)$, then $(q,s)(q',s) \in \mathcal{L(\mathcal{A},\mathcal{K})}$.
\item If $\delta(q) = q_1 \wedge q_2$ or $\delta{q} = q_1 \vee q_2$ for some $q \in Q$, then $\mathcal{A}$ splits itself into two instances $(q_1,s)$ and $(q_2,s)$ such that $(q_1,s)(q_2,s) \in \mathcal{L(\mathcal{A},\mathcal{K})}$. 
\item If $\delta(q) = \Box q'$ or $\delta(q) = \Diamond q'$ for some $q'\in Q$, then automanton splits in several instances such that for every $s' \in sR$, $(q,s)(q',s')\in \mathcal{L(\mathcal{A},\mathcal{K})}$.
\end{itemize}

Now the main concern is how the automaton $\mathcal{A}$ \textit{accepts or rejects} the pointed transition system $(\mathcal{K},s_I)$. 

Lets developed a notion of a \textbf{successful instance} for an instance $(q,s)$ such that:
\begin{itemize}
 \item If $\delta(q) = 1$, then the instance succeeds and if $\delta(q) = 0$, then it does not succeed. 
 \item If $\delta(q)$ is a propositional variable $p \in \mathcal{P}$ and $p$ is true in the state $s$ (i.e., $\delta(q) = p$ and $p \in \lambda(s)$) then instance is successful. Similarly, holds for $\delta(q) = \neg p$ and $p \not\in \lambda(s)$. 
 \item Converseley, if $\delta(q) = p$ and $p \not\in \lambda(s)$ or $\delta(q) = \neg p$ and $p \in \lambda(s)$ then the instance $(q,s)$ is not successful.
\item If $\delta(q) = q'$, then automaton changes its state to new instance $(q',s)$ such that the instance $(q,s)$ is successful iff $(q',s)$ is successful.
\item If $\delta(q_1 \wedge q_2)$, then the instance $(q,s)$ succeeds iff both instances $(q_1,s)$ and $(q_2,s)$ succeed.
\item If $\delta(q_1 \vee q_2)$, then the instance $(q,s)$ succeeds iff at least one of the the instance $(q_1,s)$ and $(q_2,s)$ succeeds.
\item If $\delta(q) = \Box q'$, then iinstance $(q,s)$ succeeds iff $\forall s' \in sR$ the instances $(q,s')$ succeeds.
\item Finally, if $\delta(q) = \Diamond q'$, then instance $(q,s)$ succeeds iff $\exists s' \in sR$ such that an instance $(q,s')$ succeeds.
 \end{itemize}

The result of this process is a parse tree with instances as nodes;
\begin{definition}
The automaton $\mathcal{A}$ accepts the finite transition system $(\mathcal{K},s_I)$ iff the initial instance $(q_I,s_I) \in \mathcal{L(\mathcal{A},\mathcal{K})}$ succeeds.
\end{definition}

Above mentioned formalization for the notion of \textit{successful instance} encounter following problems:
\begin{itemize}
 \item  If parse tree is infinite, then successful instances cannot be determined in a bottom\mbox{-}up fashion.
\item If $\delta(q) = q'$, then the instance $(q,s)$ is successful iff $(q',s)$ is successful. Howere, if $\delta(q) = q$, then it end up in an \textit{infinite loop}.
\end{itemize}
To resolve these problems; "evaluation problem" is considered as solving a certain game in an infinite play where acceptance is decided according to a winning condition as explained in the next section of parity games.

\subsection{Automata acceptance using Parity conditions}

The automaton $\mathcal{B}=(Q,q_I,\delta,\Omega)$ \textbf{accepts} the pointed transition system $(\mathcal{K},s_I)$ iff there a winning strategy for Player $0$ in the $\mathcal{G}$ where $\mathcal{G} = (\mathcal{A}, Win)$ is a two player (i.e., Player $0$, Player $1$) parity game where $\mathcal{A} = (V_0, V_1, E)$ defines an arena, $V = (V_0 \cup V_1)$ and $Win \subseteq V^{\omega}$ is a winning set such that:
\\
A vertex $v = (q, s) \in V_0$ where $q\in Q$ and $s\in S$ iff 
\begin{itemize}
\item $\delta(q) = 0$.
\item $\delta(q) = p$, $p \not\in \lambda(s)$.
\item  $\delta(q) = \neg p$, $ p \in \lambda(s)$.
\item $\delta(q) = q'$ for some $q' \in Q$.
\item $\delta(q) = q_1 \vee q_2$ for $q_1,q_2 \in Q$
\item $\delta(q) = \Diamond q'$.
\end{itemize}
 
A vertex $v = (q, s) \in V_1$ iff 

\begin{itemize}
\item $\delta(q) = 1$.
\item $\delta(q) = p$, $p \in \lambda(s)$.
\item  $\delta(q) = \neg p$, $ p \not\in \lambda(s)$.
\item $\delta(q) = q_1 \wedge q_2$ for $q_1,q_2 \in Q$
\item $\delta(q) = \Box q'$.
\end{itemize}
In parity Game $\mathcal{G}$, the edge relation $E$ is defined as:
\begin{equation}
 E := \{((q,s),(q_1,s_1))| (q_1,s_1)\in (q,s)R \wedge (q,s),(q_1,s_1)\in V\}.
\end{equation}
For alternating tree automata $\mathcal{B}$ the priority function is defined by:
\begin{equation}
 \Omega(q,s) := \Omega(q) \texttt{ for } (q,s) \in V.
\end{equation}

%
\subsection{Winning Strategy}
A strategy for a Player $\sigma$ is the function $f_\sigma: V_{\sigma}\times V \rightarrow V$ and a prefix play for $\pi = v_0v_1\dots v_l$ is conform with $f_{\sigma}$ if for every $i$ with $0\leq i <l$ and $v_i \in V_{\sigma}$ is defined and we have for $v_{i+1}=f_{\sigma}(v_0\dots v_i)$. A play is conform with $f_{\sigma}$ if each of its prefix is conform with $f_{\sigma}$. $f_{\sigma}$ is a strategy for Player $\sigma$ on $U \subseteq V$ if it is defined for every prefix of a play which is conform with it, starts in a vertex $U$ and does not end in a dead end for player $\sigma$. A strategy $f_{\sigma}$ is a winning strategy for Player $\sigma$ on $U$ if all plays which are conform with $f_{\sigma}$ and start from a vertex in $U$ are wins for $Player \sigma$.

\begin{definition}
 Player $\sigma$ wins a game $\mathcal{G}$ on $U\subseteq V$ if he has a winning strategy on $U$.
\end{definition}

The winning region for Player $\sigma$ is the set $W_{\sigma}(\mathcal{G})\subseteq V$ of all vertices such that Player $\sigma$ wins $(\mathcal{G},v)$ (i.e., Player $0$ wins $\mathcal{G}$ on $\{v\}$). Hence, for any $\mathcal{G}$, Player $\sigma$ wins $\mathcal{G}$ on $W_{\sigma}(\mathcal{G})$.

\begin{definition}
The automaton  $\mathcal{A} = \{Q, q_I,\delta,\Omega\}$ accepts the pointed transition system $(\mathcal{K},s_I)$ iff there is a winning strategy for Player $0$ in the parity game $\mathcal{G} = ((V_0, V_1, E), \Omega(q))$.
\end{definition}

The language of automaton $\mathcal{A}$ consists of the pointed transition systems which $\mathcal{A}$ accepts and is denoted by $L(\mathcal{A})$~\cite{Daniel02}.

\begin{example}
 Let $\delta(q) = \Box q_I$ and $\Omega(q_I) = 0$. Let $(\mathcal{K},s_I)$ by any pointed transition system. Player $1$ can't win since he losses every finite and infinite play because the only priority function $\Omega(q) = 0$, which complies to single vertex $(q_I,s_I)\in V_0$  with infinite loop itself hence accepts every pointed transition system. 
\end{example}

\begin{example}
 Let $\delta(q) = \Diamond q_I$ and $\Omega(q_I) = 1$. This automaton does not accept any pointed transition system.
\end{example}

\begin{proposition}\textit{(Word Problem)}
The word problem is to decide whether a given alternating tree automaton $\mathcal{A} = \{Q, q_I,\delta,\Omega\}$ accepts a given finite pointed transition system $(\mathcal{K},s_I)$ .
\end{proposition}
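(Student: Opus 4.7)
The plan is to reduce the word problem to the problem of deciding the winner in a finite parity game, a problem which is already known to be decidable. The key observation, made in the preceding subsection, is that $\mathcal{A}$ accepts $(\mathcal{K},s_I)$ iff Player~$0$ has a winning strategy from vertex $(q_I,s_I)$ in the parity game $\mathcal{G}=(\mathcal{A}',Win)$ whose arena $\mathcal{A}'=(V_0,V_1,E)$ and priority function $\Omega(q,s):=\Omega(q)$ are constructed from $\mathcal{A}$ and $\mathcal{K}$ exactly as described above. So the algorithmic content of the proposition is really the decidability of parity games over a finite arena.

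First, I would carry out the construction of $\mathcal{G}$ explicitly from the inputs $\mathcal{A}=(Q,q_I,\delta,\Omega)$ and $(\mathcal{K},s_I)$ with $\mathcal{K}=(S,R,\lambda)$. Since both $Q$ and $S$ are finite, the vertex set $V\subseteq Q\times S$ is finite, of size at most $|Q|\cdot|S|$, and the partition into $V_0$ and $V_1$ can be read off in time polynomial in $|Q|,|S|,|\mathcal{P}|$ by inspecting $\delta(q)$ and $\lambda(s)$ according to the case analysis given earlier. The edge relation $E$ is likewise computable in polynomial time from $\delta$ and $R$, and the coloring $\Omega$ on $V$ is obtained by composition with the projection onto $Q$. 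Hence the parity game $\mathcal{G}$ is a finite object of size polynomial in $|\mathcal{A}|+|\mathcal{K}|$.

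Next I would invoke the fact that finite parity games are determined and effectively solvable: there is an algorithm that, given a finite parity game and a starting vertex, decides in finite time whether Player~$0$ has a winning strategy from that vertex. This can be justified by standard means (for instance, by reducing to a $\mu$-calculus model checking question on the arena, or by a direct recursive algorithm on the number of priorities, both of which terminate because $V$ and the range of $\Omega$ restricted to $V$ are finite). Applying this solver with start vertex $(q_I,s_I)$ and combining with the acceptance characterisation from the previous subsection yields the required decision procedure.

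The main obstacle is not decidability itself, which follows immediately from the finiteness of $V$, but giving a meaningful complexity bound: the best known algorithms for parity games are not polynomial in the number of priorities, so the procedure above runs in time exponential in $\mathit{index}(\mathcal{A})$ in the worst case. For the statement as phrased, which only asks for a decision procedure, this subtlety does not affect the proof, but I would flag it explicitly and note that the bound transfers directly to the word problem for $\mathcal{A}$ on $(\mathcal{K},s_I)$.
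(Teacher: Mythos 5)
Your reduction of the word problem to deciding the winner of the finite parity game $\mathcal{G}$ built from $\mathcal{A}$ and $(\mathcal{K},s_I)$ is exactly the route this paper takes: the subsection on automata acceptance via parity conditions defines that game, and the later complexity theorem (word problem solvable in time $\mathcal{O}(ln(\frac{2nkn}{b})^{\lfloor b/2\rfloor})$, membership in $UP\cap co$\mbox{-}$UP$) is precisely the quantitative version of the bound you flag. Note only that the paper states this ``proposition'' as the \emph{definition} of the word problem rather than as a decidability claim, so it supplies no proof of its own; your argument is correct and consistent with the paper's treatment.
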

\begin{proposition}\textit{(Emptiness Problem)}
 The emptiness problem is to show that an alternating tree automaton $\mathcal{A} = \{Q, q_I,\delta,\Omega\}$ accepts if $\mathcal{A}$ accepts at least one transition system.
\end{proposition}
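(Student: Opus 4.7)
The plan is to reduce the emptiness problem for an alternating tree automaton $\mathcal{A} = (Q, q_I, \delta, \Omega)$ to the problem of deciding the winner of a single finite parity game, and then invoke the effective decidability of parity games. The key issue is that, in contrast to the acceptance game from the previous subsection, where the arena is determined by a supplied pointed transition system $(\mathcal{K}, s_I)$, in the emptiness setting no transition system is given in advance. Accordingly Player $0$ must be allowed to \emph{construct} a witnessing model on the fly while simultaneously witnessing its acceptance.

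Concretely, I would first define a parity game $\mathcal{G}_\mathcal{A} = ((V_0, V_1, E), \Omega_\mathcal{A})$ whose vertices are pairs $(q, \alpha)$ with $q \in Q$ and $\alpha \subseteq \mathcal{P}$, where $\alpha$ plays the role of the proposition labelling of the state currently being simulated. The partition into $V_0$ and $V_1$ follows exactly the case analysis of the acceptance game: Player $0$ owns atomic vertices that are losing for her ($\delta(q) = 0$, or $\delta(q) = p$ with $p \notin \alpha$, etc.) together with the disjunctive and $\Diamond$-vertices, while Player $1$ owns the dual cases and the $\Box$-vertices. For $\delta(q) = \Diamond q'$ Player $0$ chooses some $\alpha' \subseteq \mathcal{P}$ and moves to $(q', \alpha')$; for $\delta(q) = \Box q'$ Player $1$ chooses $\alpha'$ similarly. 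Setting $\Omega_\mathcal{A}(q, \alpha) := \Omega(q)$ makes the winning condition coincide with the parity condition of $\mathcal{A}$.

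Second, I would establish the equivalence $L(\mathcal{A}) \neq \emptyset$ iff Player $0$ wins $\mathcal{G}_\mathcal{A}$ from some initial vertex $(q_I, \alpha_I)$. For $(\Rightarrow)$, given $(\mathcal{K}, s_I) \in L(\mathcal{A})$ together with a Player $0$ winning strategy in the acceptance game, one projects each visited vertex $(q, s)$ to $(q, \lambda(s))$ and transports the strategy along this projection to obtain a winning strategy in $\mathcal{G}_\mathcal{A}$ starting at $(q_I, \lambda(s_I))$. For $(\Leftarrow)$, by the positional determinacy of parity games Player $0$ admits a positional winning strategy $f_0$; the subgraph of vertices reachable under $f_0$ against all Player $1$ responses is finite, and one reads it off as a pointed transition system whose states carry the labels $\alpha$, whose transitions mirror the $\Diamond$ and $\Box$ moves of the game, and on which $f_0$ induces a winning Player $0$ strategy in the corresponding acceptance game.

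Decidability then follows because $\mathcal{G}_\mathcal{A}$ has at most $|Q| \cdot 2^{|\mathcal{P}|}$ vertices and uses only the finitely many priorities already assigned by $\Omega$, so its winner is computable by any parity game solver. I expect the main obstacle to lie in the $(\Leftarrow)$ direction: extracting a genuine pointed transition system from a positional strategy requires carefully checking that every $\Diamond$-witness and every $\Box$-successor demanded by $\mathcal{A}$ is realised as an actual edge in the constructed Kripke structure, and that the resulting \emph{infinite} behaviour of $\mathcal{A}$ on this structure genuinely inherits the parity winning condition from $\mathcal{G}_\mathcal{A}$, rather than merely agreeing on the finite prefixes that $f_0$ explicitly prescribes.
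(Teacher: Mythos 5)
The statement you are proving is not actually a theorem: in the paper it is only the \emph{definition} of the emptiness problem, and the paper offers no proof of its decidability at this point (the decidability and EXPTIME bounds are asserted later by citation only). So you have set yourself the harder and more interesting task of actually proving that emptiness is decidable via a reduction to a single finite parity game. The overall strategy --- let Player $0$ build the model and its labelling on the fly, use positional determinacy to extract a finite witness --- is indeed the standard and correct one, and your worry at the end about extracting a genuine Kripke structure from a positional strategy is well placed.

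However, there is a genuine gap in the game you construct, concentrated in the $\Box$-case. You let Player $1$ choose the labelling $\alpha'$ of the successor at a vertex with $\delta(q)=\Box q'$. This makes Player $1$ too strong and breaks the $(\Rightarrow)$ direction of your equivalence. Concretely, take $Q=\{q_I,q'\}$ with $\delta(q_I)=\Box q'$ and $\delta(q')=p$. This automaton is non-empty: it accepts the pointed system consisting of a single state with no successors (the $\Box$ is vacuously satisfied), and also any system all of whose successors of $s_I$ are labelled with $p$. Yet in your game Player $1$ simply chooses $\alpha'=\emptyset$ at the $\Box$-vertex and moves to $(q',\emptyset)$, which is a dead-end owned by Player $0$; Player $0$ loses from every initial vertex. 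The underlying issue is that in the emptiness game the \emph{entire} model --- including which successors exist, how many there are, and what their labels are --- must be chosen by Player $0$; Player $1$'s only legitimate power is to select \emph{which} of the successors that Player $0$ has committed to the play descends into. Moreover, distinct automaton states that are sent to the same constructed successor must all see the same label, a consistency constraint your vertex set $Q\times 2^{\mathcal{P}}$ cannot express because each copy of the automaton imagines its own children independently. Repairing this requires either (i) enriching Player $0$'s moves so that at each simulated state she commits to a label together with a finite set of successor slots and an assignment of pending automaton states to slots (after which Player $1$ picks a slot), or (ii) first removing alternation via the simulation theorem and then playing the simple emptiness game on the resulting nondeterministic automaton. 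Both routes incur the exponential blow-up that the known EXPTIME-completeness of the problem demands, and which your $|Q|\cdot 2^{|\mathcal{P}|}$-vertex game is too small to account for.
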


\section{$\mu$\mbox{-}Calculus}
\subsection{Syntax and Semantics of $\mu$\mbox{-}Calculus}
This section introduce modal $\mu$\mbox{-}Calculus by presenting syntax and semantics and later we introduce the notion of Tarski Fixed point theorem;
\subsection{Syntax of $\mu$\mbox{-}Calculus}
\begin{definition}
\label{def:mu_calculus}
The set $L_{\mu}$ is a set of inductively defined modal $\mu$\mbox{-}Calculus formulas:
\begin{itemize}
\item $\bot$, $\top \in L_{\mu}$.
\item For every atomic proposition $p \in \mathcal{P}$; $p, \neg p \in L_{\mu}$.
\item If $\varphi,\psi \in L_{\mu}$, then $\varphi \circ \psi \in L_{\mu}$ where $ \circ \in \{\vee, \wedge\}$.
\item If $\varphi \in L_{\mu}$, then $\Box\varphi,\Diamond \varphi \in L_{\mu}$.
\item If $p \in \mathcal{P}$, $\varphi \in L_{\mu}$, and $p$ occurs only positively in $\varphi$ then $\mu p\varphi, \nu p\varphi \in L_{\mu}$.
\end{itemize}
\end{definition}
\subsection{Fixed point Operators \& Free variable set}
The operators $\mu$ and $\nu$ are called \textit{fixed point operators} (i.e., $\mu$, least fixed\mbox{-}point and $\nu$, greatest fixed\mbox{-}point) viewed as quantifiers ensuring that the argument of a fixed\mbox{-}point operator as a monotone function~\cite{Zappe02}. Accordingly, the set $free(\varphi)$ of free variables of an $L_{\mu}$ formula $\varphi$ is defined inductively as follows:
\begin{itemize}
 \item $free(\bot)=free(\top)= \emptyset$,
 \item $free(p) = free(\neg p)=  \{p\}$,
 \item $free(\varphi \circ \psi) = free(\varphi)\cup free(\psi)\}$ where $\circ\in\{\wedge,\vee\}$,
 \item $free(\Box \varphi) = free(\Diamond \varphi) = free(\varphi)$,
 \item $free(\mu p\varphi) = free(\nu p\varphi)=free(\varphi)/\{p\}$.
\end{itemize}

The sets $F_{\mu}$ and $F_{\nu}$ are defined as follows:
\begin{eqnarray}
 F_{\mu} = \{\mu p\psi| \psi\in L_{\mu}\},
 F_{\nu} = \{\nu p\psi| \psi\in L_{\nu}\}.
\end{eqnarray}

Formulas from the set $F_{\eta} = F_{\mu}\cup F_{\nu}$ are called fixed point formulas.

\subsection{Fixed point Alternation}
Beside its length, the most important characteristic of a formula $\varphi \in L_{\mu}$ is its fixed point alternation depth, that is, the number of alternations between least and greatest fixed point operators.
We now define the notion of alternation depth for $L_{\mu}$ formulas which coincides with to the notion of index for an alternating tree automaton (i.e., $\alpha(\varphi) = index(\mathcal{A}(\varphi))$)~\cite{Wilke00}.

\begin{definition}\textit{(Syntactic alternation depth)}
 For an arbitrary formula $\varphi \in L_{\mu}$, its alternation depth $\alpha(\varphi):L_{\mu}\rightarrow \mathbb{N}$ is function defined inductively:

\begin{itemize}
 \item $\alpha(\bot)=\alpha(\top)= \alpha(p) = \alpha(\neg p)= 0$,
 \item $\alpha(\varphi \circ \psi) = max\{\alpha(\varphi),\alpha(\psi)\}$ where $\circ\in\{\wedge,\vee\}$,
 \item $\alpha(\Box \varphi) = \alpha(\Diamond \varphi) = \alpha(\varphi)$,
 \item $\alpha(\mu p\varphi) = max\{1,\alpha(\psi)\} \cup \{\alpha(\nu p'\psi') + 1 |\nu p'\psi'\leq \psi,p\in free(\nu p'\psi')\}$,
\item $\alpha(\nu p\varphi) = max\{1,\alpha(\psi)\}\cup \{\alpha(\mu p'\psi') + 1 |\mu p'\psi'\leq \psi,p\in free(\mu p'\psi')\}$.
\end{itemize}
\end{definition}
Alternation depth of a formula is greater or equal to the alternation depth of any subformula. Note that, computing the alternation depth of all subformulae of an $L_{\mu}$ can be done in $\mathcal{O}((|\varphi|_{\mathcal{K}})^2 +  |\varphi|log|\varphi|) \subseteq \mathcal{O}(|\varphi|^2)$ time in bottom up fashion.
\begin{example}
Let consider a formula $\varphi = \mu p_1(\nu p_1(p_0  \wedge p_1)\vee \Diamond p_1)$, its alternation depth is $\alpha(\varphi) = 1$  and $ \psi  = \mu p_1((p_2 \wedge p_0)\vee p_1)$ and $\varphi = \nu p_2(\psi)$ it follows $\alpha(\varphi) = 2$ since $p_1 \in free(\psi)$. 
\end{example}

\subsection{Lattice Theory \& Monotone Functions}
The semantics of the $\mu$\mbox{-}Calculus is anchored in the Tarski-Knaster theorem, giving a means to do iteration basedmodel checking in an efficient manner. In order to introduce the concept of Tarski theorem some basic notions of lattice theory and fixed-point are revisited as followed:

\begin{definition}\textit{(Lattice Theory)}
 A lattice $(L,\leq)$ consists of a set $L$ and a partial order $\leq$ such that any pair of elements has greatest lower bound, the meet $\sqcap$, and a least upper bound, the join $\sqcup$, with following properties:

\begin{eqnarray}
\textit{ (associative law)} (x \sqcup y) \sqcup cup = x \sqcup ( y \sqcup z). \\
(x \sqcap y) \sqcap cup = x \sqcap ( y \sqcap z).\\
 \textit{ (commutative law)} (x \sqcup y)  = ( y \sqcup x).\\
(x \sqcap y)  = ( y \sqcap x). \\
 \textit{ (idempotency law)} (x \sqcup x)  = (x \sqcup x).\\
(x \sqcap x) = (x \sqcap x). \\
\textit{ (absorption law)} x \sqcup ( x \sqcap y) = x. \\
x \sqcap ( x \sqcap y) = x.
\end{eqnarray}
\end{definition}

\begin{example}
 Given a set $S$, the power set of $S$, (i.e., $\mathcal{P}(S)$) is $(\mathcal{P}(S),\subseteq)$ is a lattice.
\end{example}

A Lattice $(L, \leq,\sqcup, \sqcap)$ is \textit{complete} if $\forall A \subseteq L$ implies $\sqcup A$ and $\sqcap A$ are defined and there also exists a \textit{minimum element} (i.e., $\bot = \sqcap L$) and a \textit{maximum element} (i.e., $\top = \sqcup L$).

\begin{example}
 Given a set $S$, the power set of $S$, (i.e., $\mathcal{P}(S)$) is $(\mathcal{P}(S),\subseteq)$ is a lattice.
For a given set $A \subseteq \mathcal{P}(S)$ of subsets such that maximal set $\sqcup A = \bigcup_{S'\in A}S'$ and minimal set $\sqcap A = \bigcap_{S'\in A}S'$:
\begin{eqnarray}
 \sqcup A = \{\bigcup_{S'\in A}S'| A \supseteq \mathcal{P}(S')\} \\
 \sqcap A = \{\bigcap_{S'\in A}S'| A \subseteq \mathcal{P}(S')\} \\
\end{eqnarray}
\end{example}

\begin{definition}\textit{(Monotone Functions)}

A monotonic function (or monotone function) is a function which preserves the given order is formalized as followed:
\begin{itemize}
 \item $f:L \rightarrow L$ is a monotonic order preserving if
\begin{equation}
 \forall x,y \in L. x \leq y \Rightarrow f(x) \leq f(y).
\end{equation}
\item $x$ is a fix\mbox{-}point if $f(x) = x$.
\end{itemize}
 
\end{definition}

$f^0$ is an identity function and $f^{n + 1} = f^n \circ f^0$, $f$ monotonic implies that $f^n$ is also monotonic. The identity function is monotonic and composing two monotonic functions gives a monotonic function. 

\subsection{Tarski\mbox{-}Knaster Fix\mbox{-}point Theorem}
\begin{theorem}
 Let $f: L \rightarrow L$ be a monotonic function on a complete lattice $(L, \leq,\sqcup, \sqcap)$ then for $A = \{y|f(y)\leq y\}$, $x = \sqcap A$ is the \textit{least fixed point} of $f$.
\end{theorem}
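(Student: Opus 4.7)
The plan is to establish two inequalities, $f(x)\leq x$ and $x\leq f(x)$, which together force $x$ to be a fixed point, and then to observe that every fixed point lies in $A$ so that $x=\sqcap A$ is automatically the smallest of them. Throughout I will rely on monotonicity of $f$ and the universal property of the meet in the complete lattice $(L,\leq,\sqcup,\sqcap)$; no iteration, chain construction, or transfinite induction is needed.

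First I would show $f(x)\leq x$. Pick an arbitrary $y\in A$, so $f(y)\leq y$. Since $x=\sqcap A$ is a lower bound of $A$, we have $x\leq y$, and monotonicity gives $f(x)\leq f(y)\leq y$. Hence $f(x)$ is a lower bound for $A$, and by the defining property of the meet we conclude $f(x)\leq x$. Next I would promote this into the reverse inequality: applying $f$ to $f(x)\leq x$ and using monotonicity yields $f(f(x))\leq f(x)$, which says exactly that $f(x)\in A$. Therefore $x=\sqcap A \leq f(x)$. Combining the two gives $f(x)=x$, so $x$ is a fixed point.

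To finish, I would verify minimality: for any fixed point $z$ of $f$ we have $f(z)=z\leq z$, so $z\in A$, whence $x=\sqcap A\leq z$. This shows $x$ is the least among all fixed points, not merely the least among the prefixed points $A$.

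Honestly, no step is a real obstacle — the argument is forced by the definitions. The only subtle point worth flagging is the dual role played by $x$ in the second step: it is introduced purely as a greatest lower bound of $A$, yet the argument must show it actually belongs to $A$, and that is the unique place where monotonicity is used in the non-trivial direction $f(x)\leq x \Rightarrow f(f(x))\leq f(x)$. Once this observation is in hand, the rest is bookkeeping with the meet.
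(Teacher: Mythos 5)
Your proof is correct and follows essentially the same route as the paper's own sketch: show $f(x)\leq x$ via the lower-bound property of the meet, apply monotonicity once more to get $f(x)\in A$ and hence $x\leq f(x)$, then note every fixed point lies in $A$. Your write-up is in fact cleaner than the paper's (which contains the slips ``$f(x)=\sqcap A=x$'' and ``$f^2(x)\leq f(y)$'' where inequalities $f(x)\leq\sqcap A=x$ and $f^2(x)\leq f(x)$ are meant), but no new idea is involved.
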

\begin{proof}\textit{sketch:} 
\begin{itemize}
 \item(1) $f(x) \leq x$: $\forall y \in A$, $x\leq y$ therefore $f(x)\leq f(y) \leq (x)$. So, $f(x) = \sqcap A = x$.
 \item(2) $x \leq f(x)$: by monotonicity applied to 1, $f^2(x)\leq f(y)$ so $f(x) \in A$, and $x\leq f(x)$. 
Thus $x$ is a fixed point and because all fixed points belong to $A$, $x$ is the least fixed point. Similarly for the greatest fix\mbox{-}point with $A = \{y|f(y)\geq y\}$.
\end{itemize}
\end{proof}

\subsection{Semantics of $\mu$\mbox{-}Calculus}
The formulas of modal $\mu$\mbox{-}Calculus are interpreted in Kripke structures $\mathcal{K}$ such that for every Kripke structure $\mathcal{K}$ and every $\varphi \in L_{\mu}$, where $\kappa := \mathcal{P} \rightarrow 2^{S}$ is defined as:

\begin{itemize}
 \item $||\bot||_{\mathcal{K}} = \emptyset$,  $||\top||_{\mathcal{K}} = S$,
 \item $||p||_{\mathcal{K}} = \kappa(p)$, $||\neg p||_{\mathcal{K}} = S/\kappa(p)$,
 \item $||\varphi_1 \vee \varphi_2||_{\mathcal{K}} = ||\varphi_1||_{\mathcal{K}} \cup ||\varphi_2||_{\mathcal{K}}$,
 \item $||\varphi_1 \wedge \varphi_2||_{\mathcal{K}} = ||\varphi_1||_{\mathcal{K}} \cap ||\varphi_2||_{\mathcal{K}}$,
\item $||\Box \varphi||_{\mathcal{K}} = \{s\in S| sR \subseteq ||\varphi||_{\mathcal{K}}\}$,
\item $||\Diamond \varphi||_{\mathcal{K}} = \{s\in S| sR \cap ||\varphi||_{\mathcal{K}} \neq \emptyset\}$. 
\end{itemize}
 
We will say that pointed Kripke structure $(\mathcal{K},s_I)$ is a model of $\varphi\in L_{\mu}$, denoted by $\mathcal{K}\models \varphi$ if $s_I\in ||\varphi||_{\mathcal{K}}$. Aditionally, we write $\varphi \equiv \psi$ if for all Kripke models $(\mathcal{K},s_I)$, we have $(\mathcal{K}, s_I)\models \varphi$ iff $(\mathcal{K}, s_I)\models \psi$. 

To define the semantics of the fixed\mbox{-}point operators where $\mathcal{K}$ is a Kripke structure, $p$ is a propositional variable and $S' \subseteq S$ then $\mathcal{K}[p\mapsto S']$ denotes the Kripke structure as followed:
\begin{equation}
 \mathcal{K}[p\mapsto S'] = (S, E, \kappa[p\mapsto S']) 
\end{equation}
where $\kappa[p\mapsto S']$ is given as followed:
\begin{equation*}
\mathcal{K}[p\mapsto S']p' =
\left\{
\begin{array}{rl}
S' & \text{if } p' = p\\
\kappa(p) & \text{if } p' \neq p
\end{array} \right.
\end{equation*}

The semantics of the fixed\mbox{-}point operators is now defined as:
\begin{itemize}
\item $||\mu p\varphi||_{\mathcal{K}} = \bigcap\{S' \subseteq S|\texttt{ } ||\varphi||_{\mathcal{K}[p\mapsto S']}\subseteq S'\} $
\item $||\nu p\varphi||_{\mathcal{K}} = \bigcup \{S' \subseteq S|\texttt{ } ||\varphi||_{\mathcal{K}[p\mapsto S']}\supseteq S'\} $
\end{itemize}

$\mu z.f(z)$, the least fix\mbox{-}point of $f$ is equal to $\sqcup_i f^i(\emptyset)$, where $i$ ranges over all ordinals of cardinality at most the state space $L$; when $L$ is finite, $\mu z.f(z)$  is the union of following ascending chain $\bot \subseteq f(\bot) \subseteq f^2(\bot)\dots$

$\nu z.f(z)$ =  $\sqcap_i f^i(\top)$, where $i$ ranges over all ordinals of cardinality at most the state space $L$; when $L$ is finite, $\nu z.f(z)$  is the intersection of following descending chain $\top \supseteq f(\top) \supseteq f^2(\top)\dots$

\begin{example}
 Consider a formula $\varphi = \mu p(\Box p)$. This formula characterizes those worlds of a Kripke model where only path with finite length exists. That is, we have $s_I \in ||\varphi||_{\mathcal{K}}$ iff all paths in $\mathcal{K}$ starting in $s_I$ are finite. Lets consider this formula evaluation in detail;

\begin{proof}
"$\Leftarrow$": If all paths in $\mathcal{K}$ starting in $s_I$ are finite then $s_I \in ||\varphi||_{\mathcal{K}}$ if all paths in $\mathcal{K}$ starting in $s_I$ are finite.

Let consider $\mathcal{K} = (S,R,\lambda)$ be an arbitrary Kripke model. In order to show $S_f \subseteq ||\varphi||_{\mathcal{K}}$, Suppose $S_f \subseteq S$ denote only states(worlds) of finite length paths. Accordingly to above defined semantics of $L_{\mu}$ for least fixed point $||\Box\varphi||_{\mathcal{K}[p'\mapsto S']}$ implies $S_f \subseteq S$ holds for all $S' \subseteq S$. By unwinding $\Box$ operator $||\Box\varphi||_{\mathcal{K}[p'\mapsto S']} = \{s\in S| sR \subseteq ||\varphi||_{\mathcal{K}[p'\mapsto S']}\}$ implies $S_f\subseteq S'$ (i.e., All paths are finite).
 \end{proof}
\end{example}

\section{ Encoding $\mu$-Calculus to Alternating Tree Automaton ($L_{\mu}$ = $L_{\mathcal{A}}$)}
To define the translation of $\mu$\mbox{-}Calculus formulas into an alternating tree automaton $\mathcal{A}$, let us consider a notion of subformula property and a mapping for every subformula to a state tuple~\cite{Zappe02}~\cite{Wilke00}.

\begin{definition}
 For every formula $\varphi \in L_{\mu}$, subformula of $\varphi$ defined as follows:
\begin{itemize}
 \item $\varphi$ is subformula of $\varphi \in L_{\mu}$,
 \item $\varphi,\psi$ is subformula of $\varphi\circ \psi \in L_{\mu}$ where $\circ \in \{\wedge,\vee\}$,
 \item $\varphi$ is subformula of $\Box\varphi,\Diamond\varphi,\mu p\varphi,\nu p\varphi \in L_{\mu}$.
\end{itemize}
\end{definition}

The function which simple unwinds the formula by defining a state correspondence for every consecutive subformula:
\begin{equation}
 ||\varphi||  = \langle\varphi_0,\dots,\varphi_n\rangle \texttt{ for every } \varphi_i \in Sub(\varphi) \texttt{, } i \leq |\varphi|.
\end{equation}

Let $\varphi$ be an $L_{\mu}$ formula in normal form. The alternating tree automaton $\mathcal{A}(\varphi)$ is defined by:
\begin{equation}
 \mathcal{A} = (Q, q_I,\delta, \Omega).
\end{equation}
where
\begin{itemize}
 \item Q is the set which contains for each subformula $\psi$ of $\varphi$ (including $\varphi$ itself), a state denoted by $\langle\psi\rangle$,
\item the initial state is given by $q_I = \langle\varphi\rangle$.
\end{itemize}

The transition relation $\delta$ is defined by:
\begin{itemize}

\item $\delta(\langle\bot\rangle) = 0$,  
\item $\delta(\langle\top\rangle) = 1$,
\item $\delta(\langle p\rangle)= 
\left\{
\begin{array}{rl}
p & \textit{ if } p \in free(\varphi)\\
\langle\varphi_p\rangle & \textit{ if } p\not\in free(\varphi)
\end{array}
\right.$	
\item $\delta(\langle\neg p\rangle) = \neg p$ ,
\item  $\delta(\langle\psi_1 \wedge \psi_2\rangle) = \langle\psi_1\rangle \wedge \langle\psi_2\rangle$, $\delta(\langle\psi_1 \vee \langle\psi_2\rangle) = \langle\psi_1\rangle \vee \langle\psi_2\rangle$ ,
\item $\delta(\langle\Diamond \psi\rangle) = \Diamond \langle\psi\rangle$, 
\item $\delta(\langle\Box \psi\rangle) = \Box \langle\psi\rangle$ ,
\item $\delta(\langle\mu p \psi\rangle) = \langle\psi\rangle$, 
\item $\delta(\langle\nu p \psi\rangle) = \langle\psi\rangle$.
\end{itemize}

The priority function $\Omega:Q \rightarrow \omega$ is defined by:

\begin{equation*}
\Omega(\langle\psi\rangle) =
\left\{
\begin{array}{rl}
2 \lceil \alpha(\psi)/2\rceil - 1 & \textit{ if } \psi \in F_{\mu},\alpha(\psi) > 0\\
2 \lfloor \alpha(\psi)/2\rfloor & \textit{ if } \psi \in F_{\nu},\alpha(\psi) > 0\\
0 & \textit{otherwise}
\end{array}
\right. 
\end{equation*}

The notion of index of an alternating tree automaton coincide with the notion of alternation depth. Let $\varphi$ be an $L_{\mu}$ formula then $\alpha(\varphi) = index(\mathcal{A}(\varphi))$.

\begin{example}
 Consider $\varphi = \mu q_0(q_0 \vee q_1)$ then the transition function with defined mapping is as follows:
\begin{eqnarray}
 \delta(\langle\mu q_0(q_0 \vee q_1)\rangle) = \langle q_0 \vee q_1\rangle\\
 \delta(\langle q_0 \vee q_1\rangle) = \langle q_0\rangle \vee \langle q_1\rangle\\
 \delta(\langle q_0\rangle) = q_0.
\end{eqnarray}
\end{example}
 
A vertex $v = (\langle\psi\rangle, s)$ belongs to Player $0$ iff 
\begin{itemize}
\item $\psi = \bot$,
\item $\psi = p$, $p \in free(\varphi),$  $s \not\in \lambda(s)$,
\item $\psi = \neg p$, $p \in free(\varphi),$  $s \in \lambda(s)$,
\item $\psi = p$, $p \not\in free(\varphi)$,
\item $\psi = \eta p\psi'$ where $\eta \in \{\mu,\nu\}$,
\item $\psi = \psi_1 \vee \psi_2$ for some $\psi_1,\psi_2 \in L_{\mu}$,
\item $\psi = \Diamond \psi'$.
\end{itemize}

A vertex $v = (\langle\psi\rangle, s)$ belongs to Player $1$ iff 
\begin{itemize}
\item $\psi = \top$,
\item $\psi = p$, $p \in free(\varphi),$  $s \in \lambda(s)$,
\item $\psi = \neg p$, $p \in free(\varphi),$  $s \not\in \lambda(s)$,
\item $\psi = p$, $p \in free(\varphi)$,
\item $\psi = \psi_1 \wedge \psi_2$ for some $\psi_1,\psi_2 \in L_{\mu}$,
\item $\psi = \Box \psi'$.
\end{itemize}

In parity Game $\mathcal{G}$, the edge relation $E^{\mathcal{G}}$ is defined as:
\begin{equation*}
E^{\mathcal{G}} = 
\left\{
\begin{array}{ccc}
\{(\langle\psi'\rangle ,s)|\langle\psi'\rangle \in \delta(\langle\psi\rangle)\} & \textit{ if } \psi \neq \Diamond\psi',\Box \psi'\\
\{(\langle\psi'\rangle ,s')|\langle\psi'\rangle \in \delta(\langle\psi\rangle),s' \in sR\} & \textit{ if } \psi = \Diamond\psi',\Box \psi'
\end{array}
\right.
\end{equation*}

For $\mu$\mbox{-}formula the priority is odd and for a $\nu$\mbox{-}formula priority is even.

\begin{theorem}
 Let $\varphi$ be an arbitrary $L_{\mu}$ formula. Then $\varphi$ and $\mathcal{A}(\varphi)$ are equivalent, that is:
\begin{equation}
 ||\varphi|| = ||\mathcal{A}(\varphi)||.
\end{equation} 
\end{theorem}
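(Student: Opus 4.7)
The plan is to prove the equivalence by structural induction on $\varphi$, establishing the stronger claim that for every subformula $\psi$ of $\varphi$ and every pointed transition system $(\mathcal{K}, s_I)$, we have $s \in ||\psi||_{\mathcal{K}}$ if and only if Player $0$ has a winning strategy in the induced parity game from the vertex $(\langle\psi\rangle, s)$. The theorem then specializes to $\psi = \varphi$ and $s = s_I$. The whole proof rests on matching the Kripke-theoretic semantics of each connective with the ownership/edge structure that the translation places at the corresponding vertex, and matching the two fixed-point operators with the parity condition.

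First I would dispatch the base and propositional cases. For $\bot, \top, p, \neg p$ the transition $\delta(\langle\psi\rangle)$ produces a dead-end vertex, and the ownership rules (for instance $\delta(q) = p$ with $p \notin \lambda(s)$ places the vertex in $V_0$) are exactly calibrated so that the player forced to move loses precisely when the Kripke semantics declares the formula false at $s$. For $\psi_1 \vee \psi_2$, the vertex lies in $V_0$ and has edges to $(\langle\psi_1\rangle,s)$ and $(\langle\psi_2\rangle,s)$, so Player $0$ wins iff he can choose a disjunct whose sub-game he wins, which by induction is iff $s \in ||\psi_1||_{\mathcal{K}} \cup ||\psi_2||_{\mathcal{K}}$; the conjunctive case is dual. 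For $\Diamond\psi$ and $\Box\psi$ the edge relation $E^{\mathcal{G}}$ quantifies over $sR$, so the same reasoning with the other induction hypothesis identifies the winning region with $\{s : sR \cap ||\psi||_{\mathcal{K}} \neq \emptyset\}$ and $\{s : sR \subseteq ||\psi||_{\mathcal{K}}\}$ respectively. In every such case the priority assigned is $0$ (or otherwise irrelevant to the local argument) and the Tarski--Knaster semantics plays no role.

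The main obstacle is the fixed-point cases $\mu p\psi$ and $\nu p\psi$, where the edge $\delta(\langle\eta p\psi\rangle) = \langle\psi\rangle$ lets the game re-enter the body of the operator and the parity condition is what enforces the difference between least and greatest fixed points. For the $(\Rightarrow)$ direction with $\eta = \mu$, I would use the ordinal approximants from the Tarski--Knaster construction: if $s \in ||\mu p\psi||_{\mathcal{K}}$ then $s$ enters the iteration at some least ordinal $\beta_s$, and Player $0$ can play a strategy that strictly decreases this rank each time the vertex $\langle\mu p\psi\rangle$ is visited, so along any infinite play only $\nu$-priorities can be regenerated infinitely often; dually for $\nu$. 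For $(\Leftarrow)$ I would start from a memoryless winning strategy for Player $0$ (guaranteed by positional determinacy of parity games) and verify by transfinite induction on the approximants that the set of positions Player $0$ wins is a post-fixed point, hence contained in the greatest fixed point, and similarly a pre-fixed point for $\mu$. The delicate point, and the reason the priorities are defined through $\alpha(\psi)$ with the $\lceil\cdot\rceil$ and $\lfloor\cdot\rfloor$ adjustments, is that nested fixed points of different types must receive priorities whose parity and ordering reflect their syntactic nesting so that the dominant recurring priority along any infinite play coincides with the type of the outermost operator that is unfolded infinitely often; once this coincidence is established, the inductive step closes and the theorem follows.
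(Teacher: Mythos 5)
Your overall architecture coincides with the paper's: a structural induction in which the propositional and modal cases are settled by matching vertex ownership and the edge relation $E^{\mathcal{G}}$ against the Kripke semantics (the paper packages exactly these local arguments as Lemmas~2--6 in its appendix), and the whole difficulty is concentrated in the fixed-point cases. Your $(\Rightarrow)$ argument for $\mu p\psi$ via ordinal approximants and a rank-decreasing strategy is a legitimate alternative to what the paper does there; the paper instead shows directly that the winning set $S_{\mu}$ satisfies $g(S_{\mu})\subseteq S_{\mu}$, i.e.\ is a pre-fixed point of the induced monotone map $g$, and then invokes Knaster--Tarski to get $\mu g\subseteq S_{\mu}$. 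Either route works for that inclusion.

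The genuine gap is in your $(\Leftarrow)$ direction for $\mu p\psi$. You propose to show that the set $W$ of positions won by Player~$0$ is ``a post-fixed point, hence contained in the greatest fixed point, and similarly a pre-fixed point for $\mu$.'' The post-fixed-point argument is fine for $\nu$ (a post-fixed point is contained in $\nu g$), but its alleged dual fails: showing $W$ is a \emph{pre}-fixed point, $g(W)\subseteq W$, only yields $\mu g\subseteq W$, which is the inclusion you already have from $(\Rightarrow)$; it says nothing about $W\subseteq\mu g$. The least fixed point is the \emph{least} pre-fixed point, so membership of $W$ in the family of pre-fixed points bounds $W$ from below, not from above. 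To close this half you must actually use the parity condition: either argue as the paper does --- assume $s_1\in S_{\mu}\setminus S'$ for some pre-fixed point $S'$, and unfold the winning strategy to manufacture an infinite play in which the vertex $\langle\mu p\psi\rangle$, carrying the odd maximal priority of its strongly connected component, recurs infinitely often, contradicting that the play is won by Player~$0$ --- or extract from the memoryless winning strategy a transfinite rank witnessing $s\in g^{\beta}(\emptyset)$ for some ordinal $\beta$, which is well-defined precisely because the odd priority cannot regenerate forever. You have all the ingredients on the table (positional determinacy, the priority assignment through $\alpha(\psi)$, the approximants), but as written the decisive step that distinguishes $\mu$ from $\nu$ in this direction is replaced by an inclusion that points the wrong way.
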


\section{Proof of Correctness}

\begin{theorem}
 Let $\varphi$ be an arbitrary $L_{\mu}$ formula. Then for every pointed transition system $(\mathcal{K},s)$ the following holds:
\begin{equation}
 (\mathcal{K},s) \models \varphi \texttt{  iff } (\mathcal{K},s)\in L(\mathcal{A(\varphi)}) 
\end{equation} 
\end{theorem}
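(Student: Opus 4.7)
The plan is to proceed by structural induction on $\varphi$, using the game-theoretic characterization of automaton acceptance from Section~3 as the bridge between the two sides. That is, $(\mathcal{K},s) \in L(\mathcal{A}(\varphi))$ unfolds to the existence of a winning strategy for Player~$0$ in the parity game $\mathcal{G}$ whose arena is built from $Q \times S$ with the vertex partition and edge relation specified in the encoding above. The concrete induction target is: for every subformula $\psi$ of $\varphi$, $s \in \|\psi\|_{\mathcal{K}}$ iff Player~$0$ wins $\mathcal{G}$ from $(\langle\psi\rangle,s)$. Applying this at $\psi=\varphi$ and $s=s_I$ yields the theorem.

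First I would dispatch the easy cases. For $\bot, \top, p, \neg p$ the claim reduces to a one-step inspection of the Player~$0$/Player~$1$ classification of $(\langle\psi\rangle,s)$ given in the encoding: the vertex is a dead end for whichever player loses, and the definition of $\|\cdot\|_{\mathcal{K}}$ on literals matches exactly. For $\psi_1 \vee \psi_2$ and $\psi_1 \wedge \psi_2$ the vertex is a disjunctive (respectively conjunctive) choice point controlled by Player~$0$ (respectively Player~$1$); the induction hypothesis applied to each of the two successor vertices transports the semantic union or intersection into the existence of a correct Player~$0$ move, or the impossibility of a refuting Player~$1$ move. The modal cases $\Box\psi'$ and $\Diamond\psi'$ are analogous: the edge relation of $\mathcal{G}$ fans out over $sR$, and the vertex partition hands $\forall$-branching to Player~$1$ and $\exists$-branching to Player~$0$, matching $\|\Box\psi'\|_{\mathcal{K}}$ and $\|\Diamond\psi'\|_{\mathcal{K}}$ verbatim.

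The real work is in the fixed-point cases. For $\psi = \nu p\,\psi'$, I would invoke the Tarski--Knaster theorem to present $\|\nu p\,\psi'\|_{\mathcal{K}}$ as the union of post-fixed points $S' \subseteq \|\psi'\|_{\mathcal{K}[p \mapsto S']}$. From such an $S'$ containing $s$, I build a strategy for Player~$0$ that mechanically unfolds the fixed point via $\delta(\langle\nu p\,\psi'\rangle) = \langle\psi'\rangle$; any infinite strategy-conformant play must revisit $\langle\nu p\,\psi'\rangle$ infinitely often, and since $\Omega(\langle\nu p\,\psi'\rangle)$ is even and dominates the colours emitted inside the fixed-point block, such a play is winning for Player~$0$. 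For $\psi = \mu p\,\psi'$ the dual approach uses ordinal signatures: to each $s \in \|\mu p\,\psi'\|_{\mathcal{K}}$ assign the least ordinal $\alpha$ at which $s$ enters the iterative construction of the least fixed point, and let Player~$0$ play so that these signatures strictly decrease on every revisit of $\langle\mu p\,\psi'\rangle$. Well-foundedness of the ordinals then forbids infinite revisits, so the odd priority $\Omega(\langle\mu p\,\psi'\rangle)$ never dominates the recurring colours, again giving Player~$0$ the win. The converse direction, extracting a model from a winning strategy, uses positional determinacy of parity games to read off a subset $S' \subseteq S$ of states from which Player~$0$ wins and to verify that $S'$ is a post-fixed point (for $\nu$) or that each of its elements carries a finite signature (for $\mu$).

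The main obstacle is the interaction between nested fixed points of alternating type. It is not enough to treat each fixed-point quantifier in isolation: a winning play may oscillate between colours associated with several enclosing $\mu$/$\nu$ operators, and only the \emph{maximum} recurring colour should determine the outcome. To handle this I would run a secondary induction on the alternation depth $\alpha(\psi)$ in parallel with the structural induction, and show that the priority assignment $\Omega(\langle\psi\rangle)$ specified in the encoding, namely $2\lceil \alpha(\psi)/2\rceil - 1$ for $F_\mu$-formulas and $2\lfloor \alpha(\psi)/2\rfloor$ for $F_\nu$-formulas, is calibrated precisely so that the outermost fixed-point operator whose state recurs infinitely often supplies the dominating colour. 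Pinning this calibration down, and synchronising it with the Tarski--Knaster/signature arguments across nested blocks, is the technical heart of the proof and the step I expect to spend the most care on.
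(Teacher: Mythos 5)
Your proposal follows essentially the same route as the paper: structural induction on $\varphi$, with membership in $L(\mathcal{A}(\varphi))$ unfolded into the existence of a winning strategy for Player~$0$ in the associated parity game, the atomic cases settled by inspecting which player owns the resulting dead-end vertex, and the fixed-point cases settled by identifying the winning region with the Tarski--Knaster fixed point of a monotone operator on $2^S$. The differences are in packaging and in the mechanics of the fixed-point step. Where you argue the boolean and modal cases inline on the game graph, the paper factors them into separate language equations, $L(\mathcal{A}(\psi_1\vee\psi_2)) = L(\mathcal{A}(\psi_1))\cup L(\mathcal{A}(\psi_2))$ and analogues for $\wedge$, $\Box$, $\Diamond$, proved in the appendix from a subgame-restriction lemma for memoryless strategies; the content is the same, only modularised. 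For $\mu p\,\psi$ the paper introduces $g(S')=\{s'\mid(\mathcal{K}[p\mapsto S'],s')\in L(\mathcal{A}(\psi))\}$ together with the set $S_{\mu}$ of winning positions, proves $g(S_{\mu})\subseteq S_{\mu}$ to get one inclusion, and for the other argues by contradiction: a winning position lying outside a pre-fixed point would generate an infinite play revisiting $\langle\mu p\,\psi\rangle$ forever, which Player~$0$ loses because that state's priority is odd and maximal. Your ordinal-signature argument is the same well-foundedness idea made explicit and quantitative, and is the more robust formulation (note only that your stated invariant ``$s\in\|\psi\|_{\mathcal{K}}$ iff Player~$0$ wins from $(\langle\psi\rangle,s)$'' needs the valuation $\mathcal{K}[p\mapsto S']$ built in for subformulas whose variables are bound higher up, exactly as both you and the paper do in the fixed-point case). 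One point genuinely in your favour: you identify that the assertion ``$\Omega(\langle\mu p\,\psi\rangle)$ is the maximum priority on the recurring cycle'' is precisely where the calibration $2\lceil\alpha(\psi)/2\rceil-1$ versus $2\lfloor\alpha(\psi)/2\rfloor$ must be checked against nested fixed points of alternating type; the paper asserts this without justification, so your planned secondary induction on alternation depth would fill a gap the paper leaves open rather than duplicate work it actually does.
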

\begin{proof}
 We proceed by induction on the size of the formula $\varphi$:

\textbf{Case:} $\varphi = \top$. Clearly, every Kripke structure $(\mathcal{K},s_I)$ is a model of $\varphi$. Thus every pointed transition system is accepted by $\mathcal{A}(\varphi)$. The initial state of game $\mathcal{G}(\mathcal{A}(\varphi),\mathcal{K},s_I)$ is a $Vertex-1$ and is dead-end. Hence, every game in this play is won by Player $0$.

\textbf{Case:} $\varphi = \bot$. Then for the complement case ($\mathcal{K},s) \not\models \bot$ and from proposition 1 it follows that automata $\mathcal{A}$ does not contain any succeeding run for $\varphi = \bot$ (i.e., $(\mathcal{K},s) \not\in L(\mathcal{A(\varphi)})$.

\textbf{Case:} $\varphi = p$. Let $(\mathcal{K},s_I)\models \varphi$ if $s_I\in\kappa(p)$. Thus in $\mathcal{G}(\mathcal{A}(\varphi),\mathcal{K},s_I)$ is $vertex-1$ and a deadend as well, therefore $(\mathcal{K},s_I)\in L(\mathcal{A}(\varphi))$. Similarly, if $(\mathcal{K},s_I)\not\models \varphi$ if $\kappa(p)\in s_I$ then we have $s_I\not\in\kappa(p)$, thus $\mathcal{G}(\mathcal{A}(\varphi),\mathcal{K},s_I)$ is $vertex-0$ and a dead\mbox{-}end. Therefore, $(\mathcal{K},s_I)\not\in L(\mathcal{A}(\varphi))$.

\textbf{Case:} $\varphi = \neg p$. Similar to the previous case.

\textbf{Case:} $\varphi = \psi_1 \wedge \psi_2$, then:

\begin{align}
(\mathcal{K},s_I)\models \psi_1 \wedge \psi_2 &=  (\mathcal{K},s_I)\models \psi_1 \wedge \psi_2\\
&= s_I\in ||\psi_1||_{\mathcal{K}}\cap s_I\in ||\psi_2||_{\mathcal{K}}	& (L_{\mu}, induction)\\
&= (\mathcal{K},s_I)\models \psi_1 \texttt{ and } (\mathcal{K},s_I)\models \psi_2 & (def.)	\\
&= (\mathcal{K},s_I)\in L(\mathcal{A}(\psi_1)) \cap L(\mathcal{A}(\psi_2)) & (lemma.3\footnotemark[1])\\ 
&= (\mathcal{K},s_I)\in L(\mathcal{A}(\psi_1\wedge\psi_2)) & (lemma.3\footnotemark[1]))
\end{align}
 \footnotetext[1]{Appendix for listed proof;}
\textbf{Case} $\varphi = \psi_1 \vee \psi_2$, similar to the previous case, $lema.4\footnotemark[1])$ is used instead.

\textbf{Case} $\varphi = \Box \psi$

\begin{align}
(\mathcal{K},s_I)\models\Box\varphi &= s_I\in ||\Box\psi||_{\mathcal{K}}	& (definition)\\
 &= sR\in ||\psi||_{\mathcal{K}}	& (L_{\mu})\\
 &= \forall s' \in sR. (\mathcal{K},s')\models \psi.	& (def.)\\
 &= \forall s' \in sR. (\mathcal{K},s')\in L(\mathcal{A}(\psi)).	& (def.)\\
 &= (\mathcal{K},s')\in L(\mathcal{A}(\Box\psi)).	& (lemma.5\footnotemark[1])) 
\end{align}

\textbf{Case} $\varphi = \Diamond \psi$, similar to the previous case, $lemma.6\footnotemark[1])$ is used instead.

\textbf{Case} $\varphi = \mu p\psi$, 	
Let ($\mathcal{K},s_I)$ be a pointed Kripke structure where $\mathcal{K} =(S,R,\lambda)$, consider a monotone function  $g:2^S\rightarrow 2^S$ such that:
\begin{equation}
g(S') = \{s'\in S|\forall S'\subseteq S.(\mathcal{K}[p\mapsto S'],s')\in L(\mathcal{A}(\psi))\} 
\end{equation}

\begin{equation}
 (\mathcal{K},s)\in L(\mathcal{A}(\mu p\psi)) \texttt{ iff Player 0 wins the game } \mathcal{G} = (\mathcal{A}(\mu p\psi),\mathcal{K},s).
\end{equation}

Two further notions $\mu g$, a montone function and $S_{\mu}$, set of winning positions of Player $0$ are defined as:
\begin{equation}
 \mu g = \{S' \subseteq S| g(S')\subseteq S\}
\end{equation}

\begin{equation}
 S_{\mu} = \{s\in S' | \texttt{ There is a winning strategy f for Player 0 in } \mathcal{G}(\mathcal{A}(\mu p\nu),\mathcal{K},s)\}.
\end{equation}

Hence we have to show that $\mu g = S_{\mu}$

"$\subseteq:$" It's suffice to show that $g(S_{\mu})\subseteq S_{\mu}$. Let $s\in g(S_{\mu})$, Player $0$ has a memoryless winning strategy $f$ in game $\mathcal{G}(\mathcal{A}(\mu p\psi),\mathcal{K}[p\mapsto S_{\mu}],s)$. The
$\mathcal{G}(\mathcal{A}(\mu p\psi),\mathcal{K},s)$ has an initial vertex $(\langle\mu p\psi\rangle,s)$ that has an edge to vertex $(\langle\psi\rangle,s)$ in game $\mathcal{G}(\mathcal{A}(\mu p\psi),\mathcal{K}[p\mapsto S_{\mu}],s)$.
At first Player $0$ moves the pebble to $(\langle\psi\rangle,s)$ then by playing in accordance with his memoryless winning strategy $f$ for the game $\mathcal{G}(\mathcal{A}(\mu p\psi),\mathcal{K}[p\mapsto S_{\mu}],s)$ the play reaches a vertex of form $(\langle p\rangle,s)$ which is a dead end in this game. Since Player $0$ has played with $f$, this vertex must belonged to Player $1$, that is $s\in S_{\mu} = \kappa[p\mapsto S_{\mu}](p)$. Thus by definition of $S_{\mu}$, Player $0$ has a winning strategy in $\mathcal{G}(\mathcal{A}(\mu p\psi),\mathcal{K}[p\mapsto S_{\mu}],s)$ since he move the pebble to $(\langle\mu p \psi\rangle,s)$ and then play in accordance with his winning strategy and wins. Therefore $g(S_{\mu})\subseteq S_{\mu}$.

"$\supseteq:$" In order to prove that $S_{\mu}\subseteq S'$, now proceed as proof by contradiction. Let $s_1 \in S_{\mu}$ but $s_1\not\in S'$. Since $f$ is a winning strategy for Player $0$, so restriction of $f$ to the vertices of $\mathcal{G}(\mathcal{A}(\psi),\mathcal{K}[p\mapsto S'],s_1)$ is not a winning strategy for Player $0$. Thus we obtain a vertex $(\langle\mu p\psi\rangle,s_2)$ and finite play $\pi_1$ in $\mathcal{G}(\mathcal{A}(\psi),\mathcal{K}[p\mapsto S'],s_1)$ which is consistent with $f$ such that $(\langle p\rangle,s_2)$ is the last vertex in $\pi_1$, $s_2\in S_{\mu}$ but $s_2\not\in S'$. Inductively we obtain an infinite sequence of vertices $(\langle p\rangle,s_i)_{i\in\omega}$ in game $\mathcal{G}(\mathcal{A}(\psi),\mathcal{K}[p\mapsto S'],s_1)$ consistent with restrictions of $f$. Hence, the following play in $\mathcal{G}(\mathcal{A}(\mu p\psi),\mathcal{K},s_0)$ which is consistent with $f$ and therefore won by Player $0$:
\begin{equation}
 \pi = (\langle\mu p\psi\rangle,s_0)\pi_0(\langle\mu p\psi\rangle,s_1)\pi_1(\langle\mu p\psi\rangle,s_2)\dots
\end{equation}
 
Since $\Omega(\langle\mu p \psi\rangle)$ is the maximum priority of the automaton $\mathcal{A}(\mu p \psi)$ and it is odd thus we have a contradiction. Thereforce, $(\langle\mu p\psi\rangle,s_0)$.

\textbf{Case} $\varphi = \nu p\psi$, similar to the previous case.

\end{proof}

\section{Model Checking \& Satisfiability}
\begin{proposition}\textit(Model checking in $\mu$\mbox{-}Calculus)
 Given a finite pointed Kripke structure $(\mathcal{K},s_I)$ and an $L_{\mu}$ formula $\varphi$, determine whether $(\mathcal{K},s_I) \models \varphi$.
\end{proposition}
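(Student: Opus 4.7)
The plan is to reduce model checking in $\mu$\mbox{-}Calculus to the acceptance (word) problem for alternating tree automata via the encoding constructed in Section~5, and then to reduce that problem to solving a finite parity game. So the proof of decidability (and the construction of a concrete algorithm) proceeds along three successive reductions, each one justified by a result already stated earlier in the paper.

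First I would apply the encoding $\varphi \mapsto \mathcal{A}(\varphi)$. By the preceding correctness theorem, $(\mathcal{K},s_I) \models \varphi$ if and only if $(\mathcal{K},s_I) \in L(\mathcal{A}(\varphi))$, so model checking reduces to the word problem for $\mathcal{A}(\varphi)$ on $(\mathcal{K},s_I)$. The construction of $\mathcal{A}(\varphi)$ is purely syntactic and linear in $|\varphi|$: the state set is the set $Sub(\varphi)$ of subformulas, the transition function $\delta$ is read off from the syntactic structure, and the priority function $\Omega$ is determined by the alternation-depth assignment $\Omega(\langle\psi\rangle) = 2\lceil\alpha(\psi)/2\rceil - 1$ or $2\lfloor\alpha(\psi)/2\rfloor$ depending on whether $\psi$ is a $\mu$\mbox{-} or $\nu$\mbox{-}formula.

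Second I would use the parity-game semantics of alternating tree automata set out in Section~3.3: $\mathcal{A}(\varphi)$ accepts $(\mathcal{K},s_I)$ iff Player~0 has a winning strategy in the parity game $\mathcal{G} = \mathcal{G}(\mathcal{A}(\varphi),\mathcal{K},s_I)$ from the initial vertex $(\langle\varphi\rangle, s_I)$. Because both $\varphi$ and $\mathcal{K}$ are finite, the arena of $\mathcal{G}$ is finite: its vertex set is a subset of $Sub(\varphi)\times S$, hence of size at most $|\varphi|\cdot|S|$, and its edge relation is bounded by $|\varphi|\cdot|R|$. The coloring $\Omega(q,s) := \Omega(q)$ uses at most $index(\mathcal{A}(\varphi)) = \alpha(\varphi)$ distinct priorities. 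Thus the entire question $(\mathcal{K},s_I)\models\varphi$ becomes the question whether $(\langle\varphi\rangle,s_I) \in W_0(\mathcal{G})$ in a finite parity game of polynomial size in $|\varphi|\cdot|\mathcal{K}|$.

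Third, I would invoke the decidability of finite parity games: the winning region of Player~0 can be computed by any standard parity-game solver (for instance, recursive Zielonka-style algorithm, or a small-progress-measures algorithm). Running such an algorithm on $\mathcal{G}$ and checking whether $(\langle\varphi\rangle,s_I)$ lies in $W_0(\mathcal{G})$ decides the model checking problem. The main obstacle is not correctness of the reduction, which is immediate from the two theorems already in hand, but complexity: solving parity games is in $\mathrm{NP}\cap\mathrm{co\text{-}NP}$ and the best known deterministic algorithms are exponential in the number of priorities. Since that number equals $\alpha(\varphi)$, one obtains an algorithm whose running time is polynomial in $|\mathcal{K}|$ and $|\varphi|$ for every fixed alternation depth, and in particular a decision procedure overall. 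This yields the desired algorithmic solution of the $\mu$\mbox{-}Calculus model checking problem.
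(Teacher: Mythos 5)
Your proposal is correct and follows essentially the same route as the paper: the stated proposition is really just the problem definition, and the paper's solution is exactly your three-step reduction --- compile $\varphi$ into $\mathcal{A}(\varphi)$ via the Section~5 encoding and the correctness theorem, reduce acceptance to a winning strategy for Player~$0$ in the finite parity game $\mathcal{G}(\mathcal{T})=(\mathcal{K},\mathcal{B},s_I)$, and then solve that game, with the complexity bounds of Section~8. The only minor difference is that you cite $\mathrm{NP}\cap\mathrm{co}\text{-}\mathrm{NP}$ where the paper states the sharper $UP\cap co$\mbox{-}$UP$ bound, which does not affect correctness.
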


Let $(\mathcal{K},s_I)$ be a pointed transition system and $\mathcal{B}$ an alternating automaton. Let first define a a parity game $\mathcal{G}$ as a tuple as follows:
\begin{equation}
 \mathcal{T} = (L_0,L_1,l_I,M,\Omega)
\end{equation}
where
\begin{itemize}
 \item $L_0 = V_0$ and $L_1 = V_1$ where $V_0,V_1$ are disjoint sets of vertices in arena $\mathcal{A}\in mathcal{G}$.
\item $l_I = v_0$ where $v_0$ is the starting vertex in $\mathcal{G}$.
\item  $M \subseteq E$ where $E$ are the edge relations among vertices of the game.
\item $\Omega:(L_0 \cup L_1)\rightarrow \omega$ is a priority function over finite range $\omega$. 
\end{itemize}
Clearly, the ordered pair $(L,M)$ where $L = L_0 \cup L_1$  is a directed graph, which is denoted $G(\mathcal{T})$ and called the game graph of $\mathcal{T}$.

\begin{proposition}\textit(Model Checking Problem Reduction to Acceptance Problem)
The alternating tree automaton $\mathcal{B}$ accepts $(\mathcal{K},s_I)$ if and only if Player $0$ has a winning strategy in the parity game $\mathcal{G}(\mathcal{T}) = (\mathcal{K},\mathcal{B},s_I)$ (i.e., $\mathcal{T} = (V_0,V_1,q_I\times s_I,\Omega))$.
\end{proposition}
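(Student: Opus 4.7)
The plan is essentially to unpack the definitions already in place. Acceptance of $(\mathcal{K},s_I)$ by $\mathcal{B}$ was declared, in the subsection on automata acceptance using parity conditions, to mean the existence of a winning strategy for Player~$0$ in a parity game $\mathcal{G}=(\mathcal{A},Win)$ whose arena $\mathcal{A}=(V_0,V_1,E)$ lives on the state space $Q \times S$. The present proposition merely asserts that this same game, up to renaming of components, is the tuple $\mathcal{T} = (L_0,L_1,l_I,M,\Omega)$. I would therefore begin by exhibiting an isomorphism of games between $\mathcal{G}$ and $\mathcal{T}$, checking one clause at a time: $L_0 = V_0$ and $L_1 = V_1$ are the partitions already listed in the acceptance subsection (cases on the form of $\delta(q)$ and on membership of $p$ in $\lambda(s)$), $l_I = (q_I,s_I)$ is the designated starting vertex, $M$ is the edge relation $E$ induced by the transition rules of $\delta$, and the priority $\Omega(q,s) = \Omega(q)$ is inherited from the automaton.

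Given that identification, the biconditional is immediate, but for completeness I would still sketch both directions. For ($\Rightarrow$), assume $\mathcal{B}$ accepts $(\mathcal{K},s_I)$; by the parity-game characterization of acceptance, Player~$0$ has a winning strategy in $\mathcal{G}$, and the isomorphism transports it to a winning strategy in $\mathcal{T}$ from $l_I$. For ($\Leftarrow$), a winning strategy for Player~$0$ in $\mathcal{T}$ from $l_I$ maps back to one in $\mathcal{G}$, which by definition means that $\mathcal{B}$ accepts $(\mathcal{K},s_I)$.

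For a more informative proof that does not merely appeal to the prior definition, I would instead show directly that a winning strategy in $\mathcal{T}$ corresponds to a \emph{successful instance tree} of $\mathcal{B}$ on $(\mathcal{K},s_I)$. For this I would invoke memoryless determinacy of parity games to obtain a positional winning strategy $f_0$, then recursively unfold a tree whose nodes are vertices of $\mathcal{T}$, following $f_0$'s unique choice at each $V_0$-vertex and branching over all legal moves at each $V_1$-vertex. Every finite branch terminates in a dead end favourable to Player~$0$, and every infinite branch is a play won by Player~$0$ and hence satisfies the parity condition.

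The main obstacle lies in this infinitary portion, because the bottom-up definition of successful instance is ill-founded on infinite parse trees; it is precisely to overcome this that we passed to a winning condition in a game. I will need to check that the parity condition on $\Omega$ really captures the intended semantics of least and greatest fixed points, namely that infinitely unfolding a $\mu$-state is a loss for Player~$0$ (odd priority) while infinitely unfolding a $\nu$-state is a win (even priority). Since this parity assignment and its correctness were already established when defining acceptance, the present proposition then falls out as a direct consequence of the isomorphism $\mathcal{G} \cong \mathcal{T}$.
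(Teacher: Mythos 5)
Your proposal is correct and takes essentially the same route as the paper: the paper's proof is a one-line sketch observing that accepting runs of $\mathcal{B}$ on $(\mathcal{K},s_I)$ and winning strategy trees for Player~$0$ in $\mathcal{G}(\mathcal{T})$ are identical, which is precisely your identification of $\mathcal{T}$ with the acceptance game together with the strategy-tree/instance-tree correspondence, only spelled out in more detail. Your added observation that the statement is nearly definitional (since acceptance was already defined via this parity game) is accurate and does not change the substance.
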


\begin{proof}\textit{sketch}
 Just observe that accepting runs of $\mathcal{B}$ on $(\mathcal{S_,},s_I)$ and a winning strategy trees for Player $0$ in $\mathcal{G}(\mathcal{T}) = (\mathcal{K},\mathcal{B},s_I)$ are identical.
\end{proof}

\begin{proposition}\textit(Satisfiability in $\mu$\mbox{-}Calculus)
Given an $L_{\mu}$ formula $\varphi$, determine whether there exists a pointed Kripke structure $(\mathcal{K},s_I)$ such that $\mathcal{K}\models \varphi$. 
\end{proposition}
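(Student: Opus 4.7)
The plan is to mirror the treatment of model checking in Proposition 5: translate the satisfiability question for $\varphi$ into a non-emptiness question for the alternating tree automaton $\mathcal{A}(\varphi)$ produced by the encoding of Section 5, and then solve that non-emptiness question through an appropriate parity game. Invoking the correctness theorem (Theorem 2) gives $(\mathcal{K},s)\models\varphi$ iff $(\mathcal{K},s)\in L(\mathcal{A}(\varphi))$; consequently $\varphi$ is satisfiable iff $L(\mathcal{A}(\varphi))\neq\emptyset$, which is exactly the emptiness problem of Proposition 2.

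First, I would formally record this reduction: apply $\varphi\mapsto\mathcal{A}(\varphi)$, cite Theorem 2, and conclude that satisfiability of $\varphi$ is equivalent to $L(\mathcal{A}(\varphi))\neq\emptyset$. This turns a logical question about all possible Kripke structures into a combinatorial question about a single finite object.

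Second, I would construct a parity game $\mathcal{G}^{sat}(\varphi)$ in which Player $0$ attempts to \emph{build a witness model together with a successful run}. Positions are of the form $(\langle\psi\rangle,s)$ where $s$ ranges over a candidate universe $S$ whose size is bounded in terms of $|\mathcal{A}(\varphi)|$. Player $0$ owns positions corresponding to disjunctions, $\Diamond$-modalities, fixed-point openings, and the choice of labels $\lambda(s)\subseteq\mathcal{P}$ at fresh candidate states; Player $1$ owns conjunctions and $\Box$-modalities. The priority at $(\langle\psi\rangle,s)$ is $\Omega(\langle\psi\rangle)$, inherited from the encoding of Section 5, so that $\mu$-formulas carry odd and $\nu$-formulas even priorities. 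Using memoryless determinacy of parity games, a winning strategy for Player $0$ can be unfolded into a pointed Kripke structure on which $\mathcal{A}(\varphi)$ accepts, while any model of $\varphi$ induces such a strategy through the correctness theorem.

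The main obstacle is the small-model/tree-model argument that bounds the candidate universe $S$ so that $\mathcal{G}^{sat}(\varphi)$ is finite and the overall procedure terminates. The cleanest route is to first transform $\mathcal{A}(\varphi)$ into an equivalent non-deterministic parity tree automaton, which removes the need for Player $0$ to guess the transition system separately from the run; non-emptiness for the resulting automaton can then be solved by reachability of a non-empty winning region in an associated parity game. This step incurs an exponential blow-up, yielding an \textsc{ExpTime} upper bound for satisfiability whose parameter is $\alpha(\varphi)=\text{index}(\mathcal{A}(\varphi))$. Verifying that ``Player $0$ wins $\mathcal{G}^{sat}(\varphi)$ iff $\varphi$ has a model'' is then a matter of combining positional determinacy with Theorem 2 in both directions, analogously to the sketch given for Proposition 5.
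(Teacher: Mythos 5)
You should first note that the ``statement'' here is not a theorem but the \emph{definition} of the satisfiability problem, and accordingly the paper offers no proof of it: the only material the paper supplies is the one-sentence remark that nonemptiness amounts to finding an accepted tree, the unproven Proposition asserting that the automaton $\mathcal{C}$ accepts some structure iff Player $0$ wins the game $\mathcal{T}$, and the cited \textsc{ExpTime} bound. Your proposal is therefore not ``the same as the paper's proof'' --- it is a genuine decision procedure where the paper has only a problem statement. That said, your outline is the correct and standard one: reduce satisfiability of $\varphi$ to non-emptiness of $L(\mathcal{A}(\varphi))$ via Theorem~2, then decide non-emptiness by a parity game, and you correctly identify the obstacle the paper silently skips, namely that for an \emph{alternating} automaton the several copies spawned by conjunctive transitions must agree on a single witness model, so Player~$0$ cannot simply improvise the structure locally.

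The one place where your sketch leans on an unproved pillar is exactly that step: the conversion of $\mathcal{A}(\varphi)$ into an equivalent non-deterministic parity tree automaton (the simulation theorem) and the accompanying bounded-branching/tree-model argument are substantial results that you invoke rather than establish, and neither is available anywhere in the paper's toolkit (the paper's lemmas in the Appendix only handle $\vee$, $\wedge$, $\Box$, $\Diamond$ decompositions of the acceptance games, not nondeterminization). If this were to be turned into an actual proof within the paper's framework, you would either need to import that theorem explicitly with a citation, or give the direct game construction in which Player~$0$'s moves at a position also commit to the full set of successors and labels of the current candidate state, together with a proof that a memoryless winning strategy unfolds into a (possibly infinite, finitely-branching) accepted pointed transition system and conversely that any model yields such a strategy via Theorem~2. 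As a plan the proposal is sound and strictly more informative than the paper; as a proof it still has its main technical weight resting on the invoked simulation theorem.
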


Solving the nonemptiness problem for alternating tree automata amounts to finding a tree that is accepted. Solving the winner problem for a parity game amounts to finding a memoryless winning strategy tree.


\begin{proposition}\textit(Satisfiability Problem Encoding to Emptiness)
The automaton $\mathcal{C}$ accepts a pointed Kripke structure $(\mathcal{K},s_I)$ if and only if Player $0$ wins the game $\mathcal{T}$.
\end{proposition}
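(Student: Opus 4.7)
The plan is to mirror the style of the already-established acceptance-via-parity-games correspondence, and to handle both directions by exhibiting explicit constructions rather than by an opaque equivalence argument. By the preceding material the semantics of $\mathcal{C}$ on any fixed pointed structure $(\mathcal{K},s_I)$ is already identified with the existence of a winning strategy for Player $0$ in the parity game $\mathcal{G}(\mathcal{C},\mathcal{K},s_I)$. The new ingredient here is that the Kripke structure is not fixed in advance: the game $\mathcal{T}$ must encode the freedom to choose both a model and an accepting run inside a single parity game. I would therefore first specify carefully how the vertex sets $L_0,L_1$, moves $M$, and priority function $\Omega$ of $\mathcal{T}$ are obtained from $\mathcal{C}$ alone, following the encoding in the previous sections with Player $0$ owning the existential choices ($\vee$, $\Diamond$, fixed-point unfoldings, and now also the choice of labels and of successor worlds) and Player $1$ owning the universal choices ($\wedge$, $\Box$).

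For the ``only if'' direction I would assume an accepting computation of $\mathcal{C}$ on some $(\mathcal{K},s_I)$ and lift it to a strategy for Player $0$ in $\mathcal{T}$. The accepting run is witnessed by a Player-$0$ winning strategy $f$ in $\mathcal{G}(\mathcal{C},\mathcal{K},s_I)$; I would use $\mathcal{K}$ to resolve Player $0$'s model-building choices in $\mathcal{T}$ (selecting the atomic labels and the existential successors consistently with $\lambda$ and $R$) and use $f$ to resolve the automaton-side choices. Any play in $\mathcal{T}$ consistent with this combined strategy projects to a play in $\mathcal{G}(\mathcal{C},\mathcal{K},s_I)$ with the same sequence of priorities $\Omega$, so the parity condition transfers and Player $0$ wins.

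For the ``if'' direction, which I expect to be the harder one, I would use memoryless determinacy of parity games to obtain a positional winning strategy $f^\ast$ for Player $0$ in $\mathcal{T}$. From $f^\ast$ I would synthesize a pointed Kripke structure $(\mathcal{K}^\ast,s_I^\ast)$ by unfolding the strategy tree: states are the reachable $f^\ast$-consistent positions carrying a world-component, $\lambda$ is read off from the propositional commitments that $f^\ast$ makes at atomic vertices, and $R$ is the edge relation induced by the $\Diamond$/$\Box$ transitions chosen either by $f^\ast$ (for Player $0$) or universally (for Player $1$). The restriction of $f^\ast$ to this unfolding then functions as a Player-$0$ winning strategy in $\mathcal{G}(\mathcal{C},\mathcal{K}^\ast,s_I^\ast)$, so by the acceptance criterion of Section~3 the automaton $\mathcal{C}$ accepts $(\mathcal{K}^\ast,s_I^\ast)$.

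The main obstacle I anticipate is ensuring consistency of the extracted model: Player $0$'s positional choices at atomic vertices must be coherent (the same proposition cannot be declared both true and false at the ``same'' world), and the $\Box$-choices made by Player $1$ must be realizable inside the synthesized $R$. Both issues are handled by taking the state space of $\mathcal{K}^\ast$ to be the tree unfolding of $f^\ast$-consistent world-positions rather than the raw set of vertices, so that different visits yield distinct states and no contradictory commitments can arise; once this is in place, the parity argument transfers verbatim from $\mathcal{T}$ to $\mathcal{G}(\mathcal{C},\mathcal{K}^\ast,s_I^\ast)$ because $\Omega$ is defined identically on both sides.
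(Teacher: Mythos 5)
The paper offers essentially no proof of this proposition --- only the one-line remark that solving non-emptiness amounts to finding an accepted tree and that solving a parity game amounts to finding a memoryless winning strategy tree --- so your proposal is considerably more explicit than the source. Your ``only if'' direction is fine, and your overall plan for the converse (invoke memoryless determinacy, then read a model off the positional strategy by unfolding) is the standard route. The genuine gap is in how you handle conjunctive transitions. In the game you describe, positions carry an automaton-state component and Player $0$'s model-building commitments (labels, successor worlds) are made per position; when $\delta(q)=q_1\wedge q_2$, the instances $(q_1,s)$ and $(q_2,s)$ must be satisfied at the \emph{same} world $s$, but Player $1$ pursues only one conjunct in any given play, so a winning strategy for Player $0$ may make incompatible commitments on the two branches. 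Your fix --- unfolding so that different visits yield distinct states --- separates $(q_1,s)$ from $(q_2,s)$ into distinct worlds and thereby destroys the semantics of $\wedge$. Concretely, for $\varphi=\Diamond p\wedge\Box\neg p$ Player $0$ wins the game you describe (on the $\Diamond p$ branch Player $0$ builds a $p$-successor, on the $\Box\neg p$ branch only $\neg p$-successors), yet the formula is unsatisfiable.

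There is also a complexity-theoretic obstruction showing this is not mere bookkeeping: a parity game built ``from $\mathcal{C}$ alone'' in the manner you describe has polynomially many positions, and deciding the winner of such a game is in $UP \cap co$\mbox{-}$UP$; but satisfiability of $L_{\mu}$, equivalently non-emptiness of alternating tree automata, is $EXPTIME$\mbox{-}complete, as the paper's own complexity section records. The missing idea is the simulation theorem: one must first convert the alternating automaton into an equivalent nondeterministic tree automaton via an exponential powerset-style construction that aggregates all automaton states simultaneously present at a world into a single macro-state, so that label and successor choices are made once per world; only then is non-emptiness equivalent to winning a parity game over these macro-states. Equivalently, the emptiness game must be played on \emph{sets} of states of $\mathcal{C}$ rather than on single states. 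Either way, the game $\mathcal{T}$ cannot simply reuse the vertex structure of the acceptance game with the Kripke component existentially quantified away.
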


\section{Complexity Bounds}
\begin{theorem}~\cite{Wilke00}
\begin{itemize}
 \item The Model\mbox{-}Checking problem for $\mu$\mbox{-}Calculus, is solvable in time:
\begin{equation}
 \mathcal{O}(ln(\frac{2nkn}{b})^{\lfloor b/2 \rfloor})
\end{equation}
where $k$ is the number of worlds of the Kripke structure, $l$ is the size of accessability relation, $n$ is the number of subformulas and $b$ is the alternation depth.
\item The model checking is in $UP \cap co$\mbox{-}$UP$.
\end{itemize}
\end{theorem}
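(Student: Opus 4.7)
The plan is to reduce the model-checking question to a parity-game winner question via the machinery developed in Section~5, and then invoke off-the-shelf complexity results about parity games. First I would take the alternating tree automaton $\mathcal{A}(\varphi)$ constructed in the Encoding Section and record that its state set contains one state per subformula, so $|Q| = n$, and that $\mathrm{index}(\mathcal{A}(\varphi)) = \alpha(\varphi) = b$. Combining the Equivalence Theorem $||\varphi|| = ||\mathcal{A}(\varphi)||$ with the Model-Checking Reduction proposition turns the question $(\mathcal{K},s_I) \models \varphi$ into the winner question for Player~$0$ in the parity game $\mathcal{G}(\mathcal{T}) = (\mathcal{K}, \mathcal{A}(\varphi), s_I)$, and this transformation is clearly polynomial.

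Next I would bound the size of $\mathcal{G}(\mathcal{T})$ in terms of the stated parameters. Vertices are pairs in $Q \times S$, so $|V| \le n\,k$. Edges come in two flavours: Boolean and fixed-point unwindings contribute at most $O(n\,k)$ edges, while $\Box$- and $\Diamond$-transitions contribute $\sum_{q,s} |sR| = n\cdot l$ edges, giving $|E| = O(n\,l)$. The priority function $\Omega$ of the encoding takes at most $b$ distinct values. Feeding the triple $(|V|,|E|,d) = (n k,\, n l,\, b)$ into Jurdzi\'nski's small-progress-measure algorithm, which solves a parity game with $|V|$ vertices, $|E|$ edges and $d$ priorities in time $O\bigl(d\cdot|E|\cdot(|V|/\lfloor d/2\rfloor)^{\lfloor d/2\rfloor}\bigr)$, yields the stated bound $\mathcal{O}\!\left(l\,n\,(2nk/b)^{\lfloor b/2\rfloor}\right)$ after absorbing leading constants into the asymptotics.

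The $UP \cap co\text{-}UP$ claim then follows from Jurdzi\'nski's theorem that the winner problem for parity games lies in $UP \cap co\text{-}UP$: since positional determinacy holds, a nondeterministic machine may guess a memoryless winning strategy for Player~$0$ (respectively Player~$1$), verify in polynomial time by evaluating the induced one-player graph that every opposing play is won, and the uniqueness of winning regions supplies the $UP$-style uniqueness of the accepting witness; composing with our polynomial-time reduction gives the same upper bound for $\mu$-Calculus model checking. The main obstacle will be purely bookkeeping --- matching the precise constants in the stated expression (the form $2nkn/b$ inside the exponent appears to carry a transcription artifact for $2nk/b$) and verifying that the edge count genuinely contributes the factor $l\,n$ in front, rather than a weaker $n\,k$ term; once the parameters $(n k, n l, b)$ are identified correctly, both assertions are immediate consequences of the parity-game results and no further combinatorial argument is needed.
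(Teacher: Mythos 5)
The paper itself offers no proof of this theorem --- it is quoted from the cited survey of Wilke --- so there is nothing internal to compare you against. Your route (bound the model-checking parity game $\mathcal{G}(\mathcal{A}(\varphi),\mathcal{K},s_I)$ by $|V|\le nk$, $|E|=O(nl)$, $d\le b$ using $index(\mathcal{A}(\varphi))=\alpha(\varphi)$, then run Jurdzi\'nski's small-progress-measure algorithm) is exactly the standard argument underlying the cited result, and your parameter accounting is right; the leftover factor of $d$ in Jurdzi\'nski's $O(d\cdot|E|\cdot(|V|/\lfloor d/2\rfloor)^{\lfloor d/2\rfloor})$ versus the stated $l\,n(2nk/b)^{\lfloor b/2\rfloor}$, and the garbled ``$2nkn/b$'' in the paper's display, are indeed only transcription-level discrepancies, as you suspect.

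There is, however, one genuine flaw in your justification of the second bullet. Guessing a memoryless winning strategy and verifying it on the induced one-player game certifies membership in $NP\cap co\text{-}NP$, not $UP\cap co\text{-}UP$: a player may have many distinct memoryless winning strategies, and ``uniqueness of winning regions'' does not translate into uniqueness of the accepting computation, which is what $UP$ requires. Jurdzi\'nski's $UP\cap co\text{-}UP$ proof instead exhibits a \emph{canonical} certificate --- via the reduction of parity games to discounted payoff games, whose value vector is unique (equivalently, one can use the unique least progress measure) --- so that exactly one nondeterministic branch accepts. Since you also invoke Jurdzi\'nski's theorem as a black box, the conclusion stands, but the strategy-guessing sketch you give in its place would only establish the weaker $NP\cap co\text{-}NP$ bound and should be replaced by (or explicitly deferred to) the unique-certificate argument.
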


\begin{theorem}~\cite{Wilke00}
\begin{itemize}
 \item The word problem for alternating tree automaton is solvable in time
\begin{equation}
 \mathcal{O}(ln(\frac{2nkn}{b})^{\lfloor b/2 \rfloor})
\end{equation}
where $k$ is the number of worlds of the Kripke structure, $l$ is the size of accessability relation, $n$ is the number of states of automaton and $b$ is the index of the automaton.
\item The acceptance is in $UP \cap co$\mbox{-}$UP$.
\end{itemize} 
\end{theorem}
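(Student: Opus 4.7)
The plan is to reduce the word problem to a parity game of controlled size and then invoke the sharpest known algorithms for parity games. By the correspondence between acceptance and winning strategies established earlier in the paper, $\mathcal{A}$ accepts $(\mathcal{K},s_I)$ if and only if Player $0$ wins the parity game $\mathcal{G}(\mathcal{A},\mathcal{K},s_I)$ whose arena has vertex set a subset of $Q\times S$, edge relation driven by $\delta$ and $R$, and priority function $\Omega(q,s)=\Omega(q)$. This reduction is clearly polynomial, so it suffices to bound the cost of solving the resulting parity game.

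First I would bound the three parameters of the game. Vertices: $|V|\le nk$. Edges: at each internal vertex (Boolean connective or forward to another state) only constantly many successors appear, while each modal vertex $\Box q'$ or $\Diamond q'$ branches over the $R$-successors of the current Kripke state; summing over all vertices yields $|E|=\mathcal{O}(nl+nk)$. Priorities: $\Omega$ is inherited from the automaton, so it uses at most $b=\mathrm{index}(\mathcal{A})$ distinct values.

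Next I would invoke a small-progress-measures style algorithm (due to Jurdzi\'nski), which solves a parity game with $|V|$ vertices, $|E|$ edges and $d$ distinct priorities in time $\mathcal{O}\bigl(d\cdot|E|\cdot(|V|/\lceil d/2\rceil)^{\lfloor d/2\rfloor}\bigr)$. Substituting $|V|=nk$, $|E|=\mathcal{O}(nl)$ and $d=b$ and absorbing multiplicative constants gives the advertised bound $\mathcal{O}\bigl(ln(\tfrac{2nk}{b})^{\lfloor b/2\rfloor}\bigr)$. The bookkeeping needed to align the constants with the displayed formula is the only routine obstacle here; the more substantive obstacle is that one must check the progress-measure algorithm is applicable to the alternating game arena as constructed (no normalisation of $\delta$ is needed because the game already respects the Player-$0$/Player-$1$ split given in Section~3.2).

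Finally, for the UP $\cap$ co-UP claim I would invoke Jurdzi\'nski's theorem that the parity game winner problem lies in UP $\cap$ co-UP. A nondeterministic machine guesses a memoryless strategy $f:V_0\to V$ for Player $0$ and verifies in polynomial time that every cycle reachable in the one-player graph obtained by restricting $E$ to the moves chosen by $f$ has an even maximum priority; positional determinacy (Emerson--Jutla) guarantees that exactly one player possesses such a strategy, so the accepting certificate is essentially unique, placing the problem in UP. A symmetric argument for Player $1$ places the complementary problem in UP as well, hence the original is in co-UP. Composing with the polynomial-time reduction above preserves both UP and co-UP, so the acceptance problem lies in $\mathrm{UP}\cap\mathrm{co\text{-}UP}$.
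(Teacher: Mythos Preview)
The paper does not supply its own proof for this theorem; it is stated with a citation to \cite{Wilke00} and left without argument. Your proposal is precisely the standard route taken in that reference: reduce acceptance to the winner problem for the parity game $\mathcal{G}(\mathcal{A},\mathcal{K},s_I)$ already constructed in Section~3.2, bound the arena by $|V|\le nk$ and $|E|=\mathcal{O}(nl)$, apply Jurdzi\'nski's small-progress-measures algorithm for the time bound, and invoke positional determinacy together with Jurdzi\'nski's polynomial certificates for the $\mathrm{UP}\cap\mathrm{co\text{-}UP}$ membership. So your approach matches the intended one.

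One point deserves tightening. Your sentence ``$\Omega$ is inherited from the automaton, so it uses at most $b=\mathrm{index}(\mathcal{A})$ distinct values'' is not correct as written: the index is defined as the maximum number of priorities occurring within a single strongly connected component of $G(\mathcal{A})$, and the automaton as a whole may use more distinct priorities than that. To get the exponent $\lfloor b/2\rfloor$ rather than $\lfloor d/2\rfloor$ for the global priority count $d$, one first performs a polynomial-time priority compression that renumbers priorities SCC by SCC so that the effective number of colours in the game equals the index. This is the ``bookkeeping'' you allude to, but it is a genuine (if standard) step, not merely constant-juggling; without it the stated bound does not follow.
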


\begin{proposition}~\cite{Emerson88}~\cite{Wilke00}
\begin{itemize}
 \item The non\mbox{-}emptiness problem of alternating tree automaton is in $EXPTIME$.
 \item The satisfiability of $\mu$\mbox{-}Calculus is in $EXPTIME$. 
\end{itemize}
\end{proposition}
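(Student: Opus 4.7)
My plan is to establish both bullets in sequence, deriving the satisfiability bound as a corollary of the non-emptiness bound via the encoding theorem proved earlier. The first bullet is the substantive claim; the second follows directly.

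For non-emptiness of an alternating parity tree automaton $\mathcal{A}=(Q,q_I,\delta,\Omega)$, I would reduce the problem to solving a single parity game $\mathcal{T}$ whose vertex set is exponential in $|Q|$ but whose priorities are inherited from $\Omega$. Unlike the acceptance game of Section~3, where the pointed transition system supplies the vertices, here there is no given model, so Player~0 must simultaneously construct a candidate model and resolve the automaton's disjunctive and existential choices, while Player~1 challenges conjunctions and universal modalities. Since several automaton copies may be simultaneously active at the same model node, the natural state of $\mathcal{T}$ is a macrostate $M \subseteq Q$ together with labelling information for the simulated node, giving at most $2^{|Q|}\cdot 2^{|\mathcal{P}|}$ vertices; transitions apply $\delta$ componentwise to $M$, with $\Box$ and $\Diamond$ prescribing how $M$ distributes across the successor nodes Player~0 claims to construct. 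Lifting $\Omega$ to this macrostate game (via an appearance-record refinement that tracks which copy is responsible for which recurring priority) yields an exponential-size parity game which, by the parity-solver bound cited in the preceding complexity theorems, is solvable in $\mathrm{EXPTIME}$ in $|\mathcal{A}|$.

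For the satisfiability bullet, the argument is immediate from the encoding of Section~5 and its correctness theorem in Section~6. The construction produces $\mathcal{A}(\varphi)$ with $|Q|=\mathcal{O}(|\varphi|)$ and $\mathrm{index}(\mathcal{A}(\varphi)) = \alpha(\varphi)$, and the established equivalence $||\varphi||=||\mathcal{A}(\varphi)||$ shows that $\varphi$ is satisfiable iff $L(\mathcal{A}(\varphi))\neq\emptyset$. Applying the first bullet to $\mathcal{A}(\varphi)$ therefore gives an $\mathrm{EXPTIME}$ decision procedure for satisfiability of $\varphi$.

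The main obstacle is the correctness of the macrostate game in the first bullet: one must show that a winning strategy for Player~0 yields a genuine accepted pointed transition system (not merely a plausible local unfolding) and, conversely, that every accepted model induces such a winning strategy. The delicate part is the priority lifting, since a naive projection of $\Omega$ to macrostates loses track of which automaton copy is responsible for an infinitely-recurring priority; the appearance-record refinement preserves the exponential bound but requires a careful branch-by-branch argument to verify that the parity condition on the macrostate game coincides with acceptance of the extracted tree.
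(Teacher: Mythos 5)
The paper offers no argument for this proposition at all: it is stated as a citation to Emerson--Jutla and Wilke, preceded only by the remark that non-emptiness ``amounts to finding a tree that is accepted'' and that solving a parity game amounts to finding a memoryless winning strategy. So any proof you give is necessarily a different route from the paper's. Your outline is the standard one from exactly those references: turn non-emptiness into an exponential-size parity game in which Player~0 builds a candidate model while discharging the automaton's obligations macrostate-by-macrostate, solve that game with a parity-game algorithm whose running time is polynomial in the (exponential) arena and exponential only in the number of priorities, and then obtain the satisfiability bound from the linear-size translation $\varphi \mapsto \mathcal{A}(\varphi)$ together with the correctness theorem of Section~6 and the equivalence of satisfiability with $L(\mathcal{A}(\varphi)) \neq \emptyset$. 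The second bullet of your argument is complete and matches how the paper intends the reduction (its ``Satisfiability Problem Encoding to Emptiness'' proposition).

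The gap is the one you yourself flag, and it is not a side issue: the correctness of the macrostate game \emph{is} the first bullet. What you are describing is the simulation theorem converting an alternating parity tree automaton into a nondeterministic one (equivalently, the soundness and completeness of the emptiness game), and the phrase ``appearance-record refinement that tracks which copy is responsible for which recurring priority'' names the right tool without doing the work. Two points need care there. First, a record indexed by automaton copies does not suffice, because copies merge and split along a branch; the record must be organized so that every infinite trace through the run DAG inherits a well-defined dominant priority (this is the content of the Muller--Schupp / index-appearance-record construction), and one must check the resulting priority function still has polynomially many colors so the parity solver stays in EXPTIME. Second, to make the game finite you need the bounded-branching observation -- if $\mathcal{A}$ accepts anything it accepts a model of branching degree at most the number of $\Diamond$-obligations, hence at most $|Q|$ -- which is implicit but unstated in your ``successor nodes Player~0 claims to construct.'' With those two steps supplied the argument is the standard one and is correct; without them the first bullet is an announcement rather than a proof.
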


\section{Conclusion}
In this paper, we had introduced modal $\mu$\mbox{-}Calculus and provides its equivalent translation into an alternating tree automaton which implies that every Kripke query recognizable by an alternating tree automaton can be defined by a modal $\mu$\mbox{-}Calculus formula. We proposed how alternating tree automaton, together with parity game semantics provides a reasonable reduction from modal $\mu$\mbox{-}Calculus model checking problem to the membership problem and from satisfiability problem to the emptines problem with reasonable complexity bounds.
\begin{filecontents*}{general.bib}

\end{filecontents*}
\bibliographystyle{plain}
\bibliography{bib-refs}

\section{Appendix}
The notation $\mathcal{G}\downarrow v$ for a parity game $\mathcal{G}= ((V_0,V_1,E),\chi)$ and $v\in V_0 \cup V_1$ to denote the subgame game $\mathcal{G}[U]$, where $U$ is the set of those vertices that are reachable from $v$.
\begin{lemma}
 Let $(\mathcal{G},v)$ be an initialized parity game, $f$ a memoryless winning strategy for Player $0$ in this game and $\pi$ a play consistent with $f$. Let $v' = \pi(i)$ for some $i \in \omega$. Then the restriction of $f$ to the vertices of $\mathcal{G}\downarrow v'$ is a winning strategy for Player $0$ in the game $\mathcal{G}\downarrow v'$.
\end{lemma}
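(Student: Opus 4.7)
The plan is to exploit the memoryless nature of $f$: because $f$ depends only on the current vertex, its restriction $f'$ to the vertex set of $\mathcal{G}\downarrow v'$ is automatically well-defined, and any play in the subgame starting at $v'$ that is consistent with $f'$ can be prepended with the prefix $\pi(0)\pi(1)\cdots\pi(i)$ of the given play $\pi$ to yield a play in the original game $(\mathcal{G},v)$ that is consistent with $f$. I would begin by making this concatenation explicit. Given an arbitrary play $\pi'$ in $\mathcal{G}\downarrow v'$ that starts at $v'$ and is consistent with $f'$, form the combined play
\begin{equation*}
\pi^{\ast} = \pi(0)\,\pi(1)\cdots \pi(i-1)\,\pi'.
\end{equation*}
Since edges of $\mathcal{G}\downarrow v'$ are edges of $\mathcal{G}$ and since $f$ is memoryless, $\pi^{\ast}$ is a legal play in $\mathcal{G}$ starting at $v$ and consistent with $f$.

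The second step is to transfer the winning verdict from $\pi^{\ast}$ to $\pi'$. Split into two cases. If $\pi'$ is infinite then $\pi^{\ast}$ is infinite and has the same tail as $\pi'$, so $\mathrm{Inf}(\chi(\pi^{\ast})) = \mathrm{Inf}(\chi(\pi'))$; because $f$ wins from $v$, $\pi^{\ast}$ is won by Player $0$ under the parity condition, hence so is $\pi'$. If $\pi'$ is finite, its last vertex is a dead end and is also the last vertex of $\pi^{\ast}$; because $\pi^{\ast}$ is a win for Player $0$, that dead end must be a Player $1$ vertex, which is exactly the condition for Player $0$ to win the finite play $\pi'$ as well.

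Finally, I would check the remaining definitional conditions that a strategy must satisfy: $f'$ must be defined on every prefix of a play that is consistent with it, starts at $v'$, and does not end in a dead end for Player $0$. Any such prefix, when prepended by $\pi(0)\cdots\pi(i-1)$, becomes a prefix of a play in $\mathcal{G}$ that starts at $v$, is consistent with $f$, and does not end in a Player $0$ dead end; therefore $f$ is defined on the last Player $0$ vertex of that prefix, and its value lies in $\mathcal{G}\downarrow v'$ because it is reachable from $v'$. Hence $f'$ is well-defined there.

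The only non-routine point is the treatment of finite plays: one must notice that a dead-end vertex belongs to the same player in $\mathcal{G}\downarrow v'$ as in $\mathcal{G}$ (since the arena inherits the $V_0/V_1$ partition) and use the winning condition for finite plays to conclude. Everything else reduces to the observation that a memoryless strategy is insensitive to the history and that the parity condition depends only on the tail of an infinite play.
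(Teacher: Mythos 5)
Your proposal is correct and follows essentially the same route as the paper's proof: prepend the prefix $\pi(0)\cdots\pi(i-1)$ to a subgame play, use memorylessness to conclude the concatenation is consistent with $f$, and use the prefix-independence of the winning condition to transfer the verdict (the paper phrases this contrapositively, as a contradiction, but the mechanism is identical). Your additional checks on finite plays and on well-definedness of the restricted strategy are points the paper's proof leaves implicit, but they do not change the argument.
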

\begin{proof}
 By assumption every play in $\mathcal{G}$ starting in $v'$ only visit vertices that are in $\mathcal{G}\downarrow v'$. In order to prove that play $\pi'$  in $\mathcal{G}\downarrow v'$ consistent with $f$ starting in $v'$, let proceed with prove by contradiction. Hence, there exists a play $\pi''$ which is consistent with $f$ starting in $v'$ but not won by Player $0$. Now take the ith-prefix $p$, which is $p = \pi(0)\dots \pi(i-1)$. Since $f$ is memoryless, the concatinated play $p\pi''$ is also consistent with $f$ by adding finite prefix to infinite play. Thus, $p\pi''$ is not won by Player $0$ since adding a prefix is invariant. Therefore a contradicting the assumption $f$ being a winning strategy. Therefore, every play $\pi$ is consistent with $f$ starting in $v$ is won by Player $0$.
\end{proof}

\begin{lemma}
 Let $\psi_1,\psi_2$ be $L_{\mu}$ formulas in normal form. Then the following is true:
\begin{equation}
 L(\mathcal{A}(\psi_1 \vee \psi_2)) = L(\mathcal{A}(\psi_1)\cup L(\mathcal{A}(\psi_2)
\end{equation}
\end{lemma}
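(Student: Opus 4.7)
The plan is to establish the two set inclusions by exploiting the fact that the initial state $\langle \psi_1 \vee \psi_2 \rangle$ of $\mathcal{A}(\psi_1 \vee \psi_2)$ is a Player~$0$ vertex in the parity game (since $\delta(\langle \psi_1 \vee \psi_2\rangle) = \langle \psi_1\rangle \vee \langle \psi_2\rangle$) whose only successors are $(\langle \psi_1\rangle, s_I)$ and $(\langle \psi_2\rangle, s_I)$, and that once this choice is made the play proceeds entirely within the sub-automaton corresponding to $\mathcal{A}(\psi_i)$.

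First I would establish a structural observation: by the subformula-based construction of $\mathcal{A}$, every state of $\mathcal{A}(\psi_i)$ is also a state of $\mathcal{A}(\psi_1 \vee \psi_2)$, and the transition function $\delta$ and priority function $\Omega$ agree on these shared states (since $\Omega(\langle\psi\rangle)$ depends only on $\alpha(\psi)$ and whether $\psi \in F_\mu \cup F_\nu$, which are intrinsic to $\psi$ itself, independent of the ambient formula). Moreover, the state $\langle \psi_1 \vee \psi_2\rangle$ is not a subformula of either $\psi_i$ (normal form ensures fresh binding), so it is unreachable from $\langle \psi_i \rangle$ in the transition graph. Consequently, the induced subgame of $\mathcal{G}(\mathcal{A}(\psi_1 \vee \psi_2), \mathcal{K}, s_I)$ starting at $(\langle \psi_i\rangle, s_I)$ is isomorphic, as a parity game, to $\mathcal{G}(\mathcal{A}(\psi_i), \mathcal{K}, s_I)$.

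For the inclusion $\supseteq$, suppose $(\mathcal{K}, s_I) \in L(\mathcal{A}(\psi_i))$ for some $i \in \{1,2\}$. Then Player~$0$ has a winning strategy $f_i$ in $\mathcal{G}(\mathcal{A}(\psi_i), \mathcal{K}, s_I)$. Define Player~$0$'s strategy in $\mathcal{G}(\mathcal{A}(\psi_1 \vee \psi_2), \mathcal{K}, s_I)$ by choosing to move from $(\langle\psi_1 \vee \psi_2\rangle, s_I)$ to $(\langle\psi_i\rangle, s_I)$, and thereafter play according to $f_i$. By the isomorphism above, every resulting infinite play has the same sequence of priorities as a corresponding $f_i$-conforming play, hence is winning for Player~$0$; finite plays end in a dead-end owned by Player~$1$ for the same reason. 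Thus $(\mathcal{K}, s_I) \in L(\mathcal{A}(\psi_1 \vee \psi_2))$. For the inclusion $\subseteq$, suppose Player~$0$ wins $\mathcal{G}(\mathcal{A}(\psi_1 \vee \psi_2), \mathcal{K}, s_I)$ with memoryless winning strategy $f$. Then $f$ selects some successor $(\langle\psi_i\rangle, s_I)$ at the root; apply Lemma~1 (the restriction lemma) to conclude that the restriction of $f$ to vertices reachable from $(\langle\psi_i\rangle, s_I)$ is a winning strategy on the subgame, which by the isomorphism transfers to a winning strategy in $\mathcal{G}(\mathcal{A}(\psi_i), \mathcal{K}, s_I)$. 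Hence $(\mathcal{K}, s_I) \in L(\mathcal{A}(\psi_i)) \subseteq L(\mathcal{A}(\psi_1)) \cup L(\mathcal{A}(\psi_2))$.

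The main obstacle I anticipate is the structural observation in the first step: one must carefully argue that states shared between $\mathcal{A}(\psi_1)$ and $\mathcal{A}(\psi_2)$ (via a common subformula) still yield identical game dynamics, and that $\langle\psi_1 \vee \psi_2\rangle$ cannot be revisited from within either branch. Both follow from the normal-form assumption and the fact that $\psi_1 \vee \psi_2$ is strictly larger than each $\psi_i$, but this requires being explicit about how the state map $\langle\cdot\rangle$ identifies syntactically equal subformulas and about the acyclicity of the ``upward'' subformula relation at the root.
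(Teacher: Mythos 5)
Your proof is correct and follows essentially the same route as the paper's: the $\subseteq$ direction restricts the memoryless winning strategy to the subgame rooted at $(\langle\psi_i\rangle,s_I)$ via Lemma~1, and the $\supseteq$ direction extends $f_i$ by the single Player~$0$ move at the root. The only difference is that you explicitly justify the identification $\mathcal{G}(\mathcal{A}(\psi_1\vee\psi_2),\mathcal{K},s_I)\downarrow(\langle\psi_i\rangle,s_I)=\mathcal{G}(\mathcal{A}(\psi_i),\mathcal{K},s_I)$ (agreement of $\delta$, $\Omega$, and unreachability of the root), which the paper asserts without comment; this is a welcome tightening rather than a departure.
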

\begin{proof}
 $\subseteq:$ Let $(\mathcal{K},s_I)\in L(\mathcal{A}(\psi_1 \vee psi_2))$. By the definition of alternating tree automata, there exists a memoryless winning strategy $f$ for Player $0$ in the parity game $\mathcal{G}(\psi_1\vee\psi_2,\mathcal{K},s_I)$ having initial vertex $(\langle \psi_1,\psi_2\rangle, s_I)$. Hence, every consistent play starting in $(\langle \psi_1,\psi_2\rangle, s_I)$ has form $(\langle \psi_1,\psi_2\rangle, s_I)(\langle\psi_i\rangle,s_I)\pi$ for some $i\in\{0,1\}$ and some play $\pi$. By Lemma-1, there is a winning strategy for Player $0$ in game $\mathcal{G}(\psi_1\vee\psi_2,\mathcal{K},s_I)\downarrow(\langle\psi_i\rangle,s_I) = \mathcal{G}(\psi_i,\mathcal{K},s_I)$. Therefore, we have $\mathcal{K,s_I}\in L(\mathcal{A}(\psi_i))\subseteq L(\mathcal{A}(\psi_1))\cup L(\mathcal{A}(\psi_2))$.

$\supseteq:$ Let $(\mathcal{K},s_I)\in L(\mathcal{A}(\psi_1))\cup L(\mathcal{A}(\psi_2))$. The Player $0$ has a memoryless winning strategy $f_i$ in a parity game $\mathcal{G}(\mathcal{\psi_i},\mathcal{K},s_I)$ for some $i\in \{1,2\}$. By extending the strategy $f_i$ to a strategy $f$ of Player $0$ in $\mathcal{G}(\psi_1\vee\psi_2,\mathcal{K},s_I)$ by following:

\begin{equation}
 f(v)=
\left\{
\begin{array}{ccc}
\{
f_i(v) & \textit{if v is 0-vertex in } \mathcal{G}(\psi_i,\mathcal{K},s_I)\\
(\langle\psi_i,s_I\rangle) & \textit{if } v = (\langle\psi_1\vee\psi_2\rangle,s_I)
\end{array}
\right.
\end{equation}

$f(v)$ is a strategy for Player $0$. Hence, by definition of $f$, any play $(\langle \psi_1,\psi_2\rangle, s_I)(\langle\psi_i\rangle,s_I)\pi$ for some $i\in\{0,1\}$  and some play $\pi$ is consistent with $f$. 
Thus suffix $\langle\psi_i\rangle\pi$ is consistent with $f_i$ and therefore won by Player $0$. By lemma-1,  $(\langle \psi_1,\psi_2\rangle, s_I)(\langle\psi_i\rangle,s_I)\pi$ is won by Player $0$. Therefore, $f$ is a winning strategy for Player $0$ in $\mathcal{G}(\psi_1\vee\psi_2,\mathcal{K},s_I)$ implies $(\mathcal{K},s_I)\in L(\mathcal{A}(\psi_1\vee\psi_2))$.
\end{proof}

\begin{lemma}
 Let $\psi_1,\psi_2$ be $L_{\mu}$ formulas in normal form. Then the following is true:
\begin{equation}
 L(\mathcal{A}(\psi_1 \wedge \psi_2)) = L(\mathcal{A}(\psi_1)\cap L(\mathcal{A}(\psi_2)
\end{equation}
\end{lemma}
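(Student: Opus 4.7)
The plan is to mirror the structure of Lemma~3 (the disjunction case), exchanging the roles of Player~0 and Player~1 at the top of the game. The key observation is that the vertex $(\langle \psi_1 \wedge \psi_2 \rangle, s_I)$ belongs to Player~1 (by the definition of the parity game for alternating tree automata, a conjunction is a $V_1$-vertex), whereas $(\langle \psi_1 \vee \psi_2 \rangle, s_I)$ belonged to Player~0. Consequently the quantification over the two successor edges flips: what was an existential choice for Player~0 in Lemma~3 becomes a universal guarantee that Player~0 must provide in Lemma~4.

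For the inclusion $L(\mathcal{A}(\psi_1 \wedge \psi_2)) \subseteq L(\mathcal{A}(\psi_1)) \cap L(\mathcal{A}(\psi_2))$, I would start from a memoryless winning strategy $f$ for Player~0 in $\mathcal{G}(\psi_1\wedge\psi_2,\mathcal{K},s_I)$. Since Player~1 owns the initial vertex, both continuations $(\langle\psi_1\wedge\psi_2\rangle,s_I)(\langle\psi_i\rangle,s_I)$ for $i\in\{1,2\}$ are plays consistent with $f$. Applying Lemma~1 to each, the restriction of $f$ to the vertices reachable from $(\langle\psi_i\rangle,s_I)$ is a winning strategy in $\mathcal{G}(\psi_1\wedge\psi_2,\mathcal{K},s_I)\downarrow(\langle\psi_i\rangle,s_I)$, which is identified with $\mathcal{G}(\psi_i,\mathcal{K},s_I)$. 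Hence $(\mathcal{K},s_I)\in L(\mathcal{A}(\psi_i))$ for both $i$, giving membership in the intersection.

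For the reverse inclusion $\supseteq$, I would assume memoryless winning strategies $f_1, f_2$ for Player~0 in $\mathcal{G}(\psi_1,\mathcal{K},s_I)$ and $\mathcal{G}(\psi_2,\mathcal{K},s_I)$ respectively, and glue them into a strategy $f$ for Player~0 in $\mathcal{G}(\psi_1\wedge\psi_2,\mathcal{K},s_I)$ by setting
\begin{equation*}
f(v) =
\left\{
\begin{array}{ll}
f_i(v) & \text{if } v \text{ is a 0-vertex reachable from } (\langle\psi_i\rangle,s_I), \ i\in\{1,2\}, \\
\text{arbitrary} & \text{otherwise.}
\end{array}
\right.
\end{equation*}
Since the initial vertex belongs to Player~1, the definition of $f$ need not specify a move there; whichever successor $(\langle\psi_i\rangle,s_I)$ Player~1 chooses, the suffix of the resulting play is consistent with $f_i$ and hence won by Player~0, so by Lemma~1 the whole play is won as well.

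The main obstacle I anticipate is the potential overlap of states between the two subgames: if $\psi_1$ and $\psi_2$ share a subformula $\chi$, then the state $\langle\chi\rangle$ appears in both automata, so $f_1$ and $f_2$ may disagree on vertices $(\langle\chi\rangle,s)$. I would address this by declaring that the gluing picks either $f_1(v)$ or $f_2(v)$ by a fixed convention (say, prefer $f_1$), which is still a winning move because each $f_i$ was by itself winning from those vertices; once Player~1 has committed to a disjunct at the root, only the corresponding $f_i$ is actually invoked along the ensuing play, so the convention does not introduce losing behavior. Apart from this bookkeeping, the argument is symmetric to Lemma~3.
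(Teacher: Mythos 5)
The paper itself offers no proof of this lemma (it defers to the cited reference), so your argument stands on its own. Its architecture is sound: $(\langle\psi_1\wedge\psi_2\rangle,s_I)$ is indeed a Player-1 vertex, your $\subseteq$ direction (apply Lemma~1 to both successors, since both continuations are consistent with $f$ when the opponent owns the root) is correct, and gluing $f_1$ and $f_2$ is the right idea for $\supseteq$. You also correctly spotted the one genuine obstacle, namely shared subformulas producing shared vertices.

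But your resolution of that obstacle does not work as stated, and the two justifications you give for it contradict each other. If the convention is ``prefer $f_1$'' and Player~1 moves to $(\langle\psi_2\rangle,s_I)$ at the root, then at a shared vertex $(\langle\chi\rangle,s)$ the glued strategy invokes $f_1$, not $f_2$ --- so it is false that ``only the corresponding $f_i$ is actually invoked along the ensuing play.'' Nor is it guaranteed that ``each $f_i$ was by itself winning from those vertices'': Lemma~1 only makes $f_1$ winning from vertices actually visited by some play consistent with $f_1$ starting at $(\langle\psi_1\rangle,s_I)$. A shared vertex $(\langle\chi\rangle,s)$ that is reached only inside the $\psi_2$-branch may never be visited under $f_1$, in which case $f_1(\langle\chi\rangle,s)$ can be an arbitrary losing move (take $\chi=\chi_1\vee\chi_2$ where $f_1$, never having to play there, points to the disjunct that fails at $s$; your glued strategy then loses from a position $f_2$ would have won). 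Two clean repairs: (i) construct $\mathcal{A}(\psi_1)$ and $\mathcal{A}(\psi_2)$ over disjoint copies of their state sets before forming the product game, so no vertex is shared and the naive gluing goes through; or (ii) keep shared states but define $f$ to follow $f_1$ exactly on the set $W_1$ of vertices from which $f_1$ is winning and $f_2$ elsewhere, then argue that a play starting in $W_1\cup W_2$ never leaves $W_1$ once it enters it, so some suffix of the play is consistent with a single $f_i$ from its own winning set and the parity condition, depending only on the tail, is met. Either way the lemma holds, but this step must be argued rather than asserted.
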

\begin{proof}
 Given in ~\cite{Zappe02}.
\end{proof}

\begin{lemma}
 Let $\psi$ be an $L_{\mu}$ formula in normal form and transition system $\mathcal{K}=\{S,E,\lambda)$. Then the following is true:
\begin{equation}
 L(\mathcal{A}(\Box\psi)) = \{(\mathcal{K},s_I)|\forall s'\in sR:(\mathcal{K},s')\in L(\mathcal{A}(\psi))\}
\end{equation}
\end{lemma}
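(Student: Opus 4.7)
The plan is to analyze the parity game $\mathcal{G}(\mathcal{A}(\Box\psi),\mathcal{K},s_I)$ at its initial vertex $(\langle\Box\psi\rangle,s_I)$. By the vertex classification given earlier, this vertex belongs to Player $1$, and by the definition of $\delta$ together with the edge relation $E^{\mathcal{G}}$, its successors are exactly the vertices $(\langle\psi\rangle,s')$ for $s'\in s_IR$. Since $\psi$ is a strict subformula of $\Box\psi$, the fragment of $\mathcal{A}(\Box\psi)$ reachable from $\langle\psi\rangle$ coincides with $\mathcal{A}(\psi)$ on states, transitions and priorities, so the subgame $\mathcal{G}(\mathcal{A}(\Box\psi),\mathcal{K},s_I)\downarrow(\langle\psi\rangle,s')$ is identical to $\mathcal{G}(\mathcal{A}(\psi),\mathcal{K},s')$. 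The remainder of the proof then follows a two-direction argument parallel to the proof of Lemma 3 for disjunction, but with Player $1$'s choice (universal over successors) replacing Player $0$'s.

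For the inclusion $\subseteq$, assume $(\mathcal{K},s_I)\in L(\mathcal{A}(\Box\psi))$ and let $f$ be a memoryless winning strategy for Player $0$. Fix any $s'\in s_IR$. Since the move at the initial vertex is Player $1$'s, the play prefix $(\langle\Box\psi\rangle,s_I)(\langle\psi\rangle,s')$ is consistent with $f$. Applying Lemma 1 at $i=1$, the restriction of $f$ to $\mathcal{G}(\mathcal{A}(\Box\psi),\mathcal{K},s_I)\downarrow(\langle\psi\rangle,s') = \mathcal{G}(\mathcal{A}(\psi),\mathcal{K},s')$ is a winning strategy for Player $0$, so $(\mathcal{K},s')\in L(\mathcal{A}(\psi))$.

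For the reverse inclusion $\supseteq$, suppose that for every $s'\in s_IR$ Player $0$ has a memoryless winning strategy $f_{s'}$ in $\mathcal{G}(\mathcal{A}(\psi),\mathcal{K},s')$. Define a strategy $f$ for Player $0$ in $\mathcal{G}(\mathcal{A}(\Box\psi),\mathcal{K},s_I)$ by requiring no choice at the initial vertex (it is Player $1$'s) and, once Player $1$ has committed to $s'$, following $f_{s'}$ for the rest of the play. Every play consistent with $f$ has the form $(\langle\Box\psi\rangle,s_I)(\langle\psi\rangle,s')\pi$, and its suffix $(\langle\psi\rangle,s')\pi$ is consistent with $f_{s'}$ and therefore won by Player $0$; since the prefix is finite, it does not affect the parity outcome on the infinite tail, so the whole play is won by Player $0$, whence $(\mathcal{K},s_I)\in L(\mathcal{A}(\Box\psi))$.

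The only real obstacle is cosmetic: the composite $f$ above is not obviously memoryless, because distinct $f_{s'}$ could prescribe conflicting moves at vertices they happen to share. This does not affect correctness, since the lemma only requires the existence of \emph{some} winning strategy; but if a memoryless witness is desired, one invokes positional determinacy of parity games to replace the family $\{f_{s'}\}$ by a single positional strategy that is winning on the union of their winning regions, and the above argument goes through unchanged.
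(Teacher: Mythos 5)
Your proof is correct and follows essentially the same route as the paper's: the $\subseteq$ direction restricts the given winning strategy to the subgame rooted at $(\langle\psi\rangle,s')$ via Lemma 1, and the $\supseteq$ direction glues the family $\{f_{s'}\}$ into a single strategy that defers the first move to Player $1$ and then plays the appropriate $f_{s'}$. Your closing remark about the composite strategy not being obviously memoryless is a genuine refinement the paper silently skips over (its combined $f(\pi)$ is explicitly history-dependent), and your observation that plain existence suffices, or that positional determinacy repairs it, is the right way to close that gap.
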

\begin{proof}
 $\subseteq:$ let $(\mathcal{K},s_I\in L(\mathcal{A}(\Box \psi))$ and $\mathcal{K}=\{S,E,\lambda)$. There is a memoryless winning strategy $f$ for Player $0$ in game $\mathcal{G}(\mathcal{A}(\Box \psi),\mathcal{K},s_I)$. The initial vertex $(\langle\Box \psi\rangle,s_I)$ has for each $s'\in sR$ a successor $(\langle\psi\rangle,s')$. As $f$ is a winning strategy for Player $0$, thus for prefix $(\langle\Box \psi\rangle,s_I)(\langle\psi\rangle,s')$ is consistent with $f$ for all $s'\in sR$. By Lemma-1, there is winning strategy for Player $0$ in $\mathcal{G}(\mathcal{A}(\psi),\mathcal{K},s_I)=\mathcal{G}(\mathcal{A}(\Box\psi),\mathcal{K},s_I)\downarrow(\langle\psi\rangle,s')$ for all $s'\in sR$. Therefore, for all $s'\in sR$ we have $(\mathcal{K},s')\in L(\mathcal{A}(\psi))$.
 
$\supseteq:$ For every $s'\in sR$, there is a winning strategy in $f^{s'}:V^{s'}\times V_0^{s'}\rightarrow V^{s'}$ where $V^{s'}$ and $V_0^{s'}$ denote the set of vertices in game $\mathcal{G}(\mathcal{A}(\psi),\mathcal{K},s_I)$. Now consider the game $\mathcal{G}(\mathcal{A}(\Box\psi),\mathcal{K},s_I)$ with $V$ and $V_0$ being its vertices where its initial vertex is $(\langle\Box\psi\rangle,s_I)$ and its successor are $\{(\mathcal{\psi},s')|\forall s'\in s_IR\}$. By combining strategies $f^{s'}$ and $f$ for play $\pi$ we have:
\begin{equation}
 f(\pi)=
\left\{
\begin{array}{ccc}
\{
f^{s'}((\langle\psi\rangle,s'),\pi') & \textit{if } \pi= ((\langle\psi\rangle,s))(\langle\psi\rangle,s')\psi' \textit{ for some } s'\in sR, \pi'\in (V^{s'})^*. 
\end{array}
\right.
\end{equation}

Thus, $f^{s'}$ is a winning strategy for Player $0$ in the game $\mathcal{G}(\mathcal{A}(\psi),\mathcal{K},s') = \mathcal{G}(\mathcal{A}(\Box\psi),\mathcal{K},s)\downarrow(\langle\psi\rangle,s')$ implies that $f$ is a winning strategy for Player $0$ in the game $\mathcal{G}(\mathcal{A}(\Box\psi),\mathcal{K},s)$. Therefore, $(\mathcal{K},s_I)\in L(\mathcal{A}(\Box\psi))$.
\end{proof}

\begin{lemma}
 Let $\psi$ be an $L_{\mu}$ formula in normal form and transition system $\mathcal{K}=\{S,E,\lambda)$. Then the following is true:
\begin{equation}
 L(\mathcal{A}(\Diamond\psi)) = \{(\mathcal{K},s_I)|\exists s'\in sR:(\mathcal{K},s')\in L(\mathcal{A}(\psi))\}
\end{equation}
\end{lemma}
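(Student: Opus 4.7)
The approach will parallel the proof of Lemma 5, with the roles of the two players swapped because $(\langle\Diamond\psi\rangle,s)$ is a Player $0$ vertex, representing an existential choice, rather than a Player $1$ vertex as in the $\Box$-case. Both inclusions will be reduced, via Lemma 1, to statements about the reachable subgames of $\mathcal{G}(\mathcal{A}(\Diamond\psi),\mathcal{K},s_I)$ obtained by chopping off the unique initial vertex $(\langle\Diamond\psi\rangle,s_I)$; these subgames coincide structurally with games of the form $\mathcal{G}(\mathcal{A}(\psi),\mathcal{K},s')$ for $s'\in s_IR$, because $\delta(\langle\Diamond\psi\rangle)=\Diamond\langle\psi\rangle$ and $\langle\Diamond\psi\rangle$ does not appear in the subformula closure of $\psi$.

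For the inclusion $\subseteq$, I would assume $(\mathcal{K},s_I)\in L(\mathcal{A}(\Diamond\psi))$ and fix a memoryless winning strategy $f$ for Player $0$ in $\mathcal{G}(\mathcal{A}(\Diamond\psi),\mathcal{K},s_I)$. Since the initial vertex $(\langle\Diamond\psi\rangle,s_I)$ belongs to Player $0$ and its successors are exactly the vertices $(\langle\psi\rangle,s')$ with $s'\in s_IR$, the strategy $f$ picks some specific $s^{*}\in s_IR$ with $f((\langle\Diamond\psi\rangle,s_I))=(\langle\psi\rangle,s^{*})$. The one-step prefix $(\langle\Diamond\psi\rangle,s_I)(\langle\psi\rangle,s^{*})$ is consistent with $f$, so Lemma 1 yields that the restriction of $f$ to $\mathcal{G}(\mathcal{A}(\Diamond\psi),\mathcal{K},s_I)\downarrow(\langle\psi\rangle,s^{*})=\mathcal{G}(\mathcal{A}(\psi),\mathcal{K},s^{*})$ is winning, so $(\mathcal{K},s^{*})\in L(\mathcal{A}(\psi))$, providing the required existential witness.

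For the inclusion $\supseteq$, I would assume the existence of some $s'\in s_IR$ with $(\mathcal{K},s')\in L(\mathcal{A}(\psi))$ and a memoryless winning strategy $g$ for Player $0$ in $\mathcal{G}(\mathcal{A}(\psi),\mathcal{K},s')$. I then lift $g$ to a strategy $f$ in $\mathcal{G}(\mathcal{A}(\Diamond\psi),\mathcal{K},s_I)$ by setting $f((\langle\Diamond\psi\rangle,s_I))=(\langle\psi\rangle,s')$ and letting $f$ agree with $g$ on every other reachable Player $0$ vertex. Every play consistent with $f$ starting at the initial vertex has the shape $(\langle\Diamond\psi\rangle,s_I)(\langle\psi\rangle,s')\pi$, where the suffix $(\langle\psi\rangle,s')\pi$ is consistent with $g$ and therefore won by Player $0$; since prepending a single vertex does not change the set of priorities occurring infinitely often, the extended play is won as well, giving $(\mathcal{K},s_I)\in L(\mathcal{A}(\Diamond\psi))$.

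The only delicate point, and the one I expect to require the most care, is the subgame identification used in the $\subseteq$ direction: verifying that the reachable part of $\mathcal{G}(\mathcal{A}(\Diamond\psi),\mathcal{K},s_I)$ starting from $(\langle\psi\rangle,s^{*})$ agrees with $\mathcal{G}(\mathcal{A}(\psi),\mathcal{K},s^{*})$ on vertex ownership, edge relation, and priority assignment. This reduces to the observation that $\mathcal{A}(\Diamond\psi)$ extends $\mathcal{A}(\psi)$ only by the single fresh state $\langle\Diamond\psi\rangle$, whose unique incoming position is the initial vertex, so once the play leaves that vertex the two games are indistinguishable. With this identification in place, everything else is a routine mirror of the $\Box$-case with $\exists$ and $\forall$ interchanged.
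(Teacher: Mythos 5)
Your proposal is correct; the paper itself offers no argument for this lemma (it simply defers to the cited reference), but your dualization of the paper's proof of the $\Box$-case --- Player $0$ now owns the initial vertex, so the memoryless strategy singles out one successor $(\langle\psi\rangle,s^{*})$ to which Lemma 1 applies in the $\subseteq$ direction, while the $\supseteq$ direction lifts the subgame strategy and uses prefix-invariance of the parity condition --- is exactly the intended argument. The delicate point you flag, identifying $\mathcal{G}(\mathcal{A}(\Diamond\psi),\mathcal{K},s_I)\downarrow(\langle\psi\rangle,s^{*})$ with $\mathcal{G}(\mathcal{A}(\psi),\mathcal{K},s^{*})$, is the same identification the paper already relies on silently in its Lemma 5 proof, so nothing further is needed.
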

\begin{proof}
 Given in ~\cite{Zappe02}.
\end{proof}

\end{document}